\pgfplotsset{ticks=none,compat=newest}
\numberwithin{equation}{section}
\newtheorem{thm}{Theorem}[section]
\newtheorem{prop}[thm]{Proposition}
\newtheorem{lem}[thm]{Lemma}
\newtheorem{cor}[thm]{Corollary}
\theoremstyle{definition}
\newtheorem{defn}[thm]{Definition}
\newtheorem{ass}[thm]{Assumption}
\theoremstyle{remark}
\newtheorem*{rem}{Remark}
\newtheorem*{exmp}{Example}
\newcommand{\diff}{\mathop{}\!\mathrm{d}}
\newcommand{\supp}{\mathop{}\!\mathrm{supp}}
\newcommand{\scp}[2]{\langle #1, #2\rangle}
\newcommand{\e}{\mathop{}\!\mathrm{e}}
\newcommand{\I}{\mathop{}\!\mathrm{i}}
\newcommand{\ran}{\mathop{}\!\mathrm{ran}}
\newcommand{\jap}[1]{\langle #1 \rangle}
\title{Mourre theory and spectral analysis of energy-momentum operators \\ in relativistic quantum field theory}
\author{Janik Kruse \\\\
	\small{Adam Mickiewicz University in Pozna\'n} \\
	\small{Faculty of Mathematics and Computer Science} \\  	
	\small{ul.~Uniwersytetu Pozna\'nskiego 4, 61-614 Pozna\'n, Poland} \\
	\small{E-mail: \tt{janik.kruse@amu.edu.pl}}}
\date{August 26, 2024}
\begin{document}

\maketitle

\begin{abstract}
	A central task of theoretical physics is to analyse spectral properties of quantum mechanical observables. In this endeavour, Mourre's conjugate operator method emerged as an effective tool in the spectral theory of Schrödinger operators. This paper introduces a novel class of examples from relativistic quantum field theory that are amenable to Mourre's method. By assuming Lorentz covariance and the spectrum condition, we derive a limiting absorption principle for the energy-momentum operators and provide new proofs of the absolute continuity of the energy-momentum spectra. Moreover, under the assumption of dilation covariance, we show that the spectrum of the relativistic mass operator is purely absolutely continuous in $(0,\infty)$.
\end{abstract}

\noindent Keywords: spectral theory, Mourre's conjugate operator method, absence of singular continuous spectrum, representations of the Poincaré group, dilation-covariant representations

\vspace*{0.1cm}

\noindent Mathematics Subject Classification 2020: 81Q10

\section{Introduction}

Understanding spectral properties of quantum mechanical observables is of fundamental importance in theoretical physics. By analysing the spectrum, insights into the system's stability and long-time evolution can be obtained. Specifically, the spectrum of a self-adjoint operator decomposes into a pure point, absolutely continuous and singular continuous part, where the pure-point part corresponds to bound states and the absolutely continuous part to scattering states. Typically, the singular continuous spectrum, which evades a simple physical interpretation, is empty. One of the main objectives of this paper is to prove the absence of singular continuous spectrum for relativistic energy-momentum operators.

Various mathematical techniques have been developed to investigate the spectrum of a self-adjoint operator. One powerful method in this endeavour is Mourre's conjugate operator method, which is based on a strictly positive commutator estimate. The idea of analysing spectral properties of self-adjoint operators through commutator identities and estimates can be traced back to the pioneering works of Putnam \cite{putnam1967}, Kato \cite{kato1968}, and Lavine \cite{lavine1969}. However, it was Mourre \cite{mourre1981} who advanced this approach by introducing local commutator estimates. 

Mourre's method led to substantial progress in the spectral and scattering theory of Schrödinger operators. In his seminal paper, Mourre \cite{mourre1981} demonstrated the absence of singular continuous spectrum for 2- and 3-body Schrödinger operators. The applicability of the method to $N$-body systems was extended by Perry, Sigal, and Simon \cite{perry1981} and by Froese and Herbst \cite{froese1982}. Mourre's method also played a decisive role in establishing asymptotic completeness of $N$-body Schrödinger operators (see \cite{derezinski1997} for a textbook exposition). On an abstract level, the mathematical theory underlying the conjugate operator method was notably improved by Amrein, Boutet de Monvel, and Georgescu \cite{amrein1996}.

While originally developed for non-relativistic quantum mechanics, Mourre's method has also been extended to other areas. Worthy of note, it has been applied in non-relativistic quantum electrodynamics (QED). Among many works, we mention here that a Mourre estimate was proved by Dereziński and Gérard \cite{derezinski1999} for confined Pauli--Fierz Hamiltonians, by Fröhlich, Griesemer, and Schlein \cite{froehlich2004} for a Hamiltonian describing Compton scattering, by Fröhlich, Griesemer, and Sigal \cite{froehlich2008} for the standard model of non-relativistic QED, by Chen, Faupin, Fröhlich, and Sigal \cite{chen2011} for dressed electrons, and by Møller and Rasmussen \cite{moller2013} for the translation-invariant massive Nelson model. 

The application of Mourre's conjugate operator method in relativistic quantum field theory is more difficult due to the abstract nature of the Hamiltonian. Typically, the renormalised Hamiltonian is derived through a limiting procedure or is defined axiomatically as the generator of time translations. Although a Mourre estimate was established for the spatially cut-off $P(\varphi)_2$ Hamiltonian by Dereziński and Gérard \cite{derezinski2000} and Gérard and Panati \cite{gerard2008}, it remained an open problem whether a Mourre estimate can be proved for the Hamiltonian in the infinite-volume limit. Recently, the author \cite{kruse2024} applied Mourre's method within the axiomatic framework of Haag--Kastler quantum field theory to prove the existence of asymptotic observables. In this context, Mourre's method was implemented through scattering theory by comparing the abstract dynamics generated by the Hamiltonian to a more concrete free dynamics.

In this paper, we apply Mourre's conjugate operator method in the relativistic setting directly to the energy-momentum operators, which are defined axiomatically as the generators of spacetime translations. By assuming Lorentz covariance and the spectrum condition, we prove Mourre estimates for the energy-momentum operators $P=(P_0,\mathbf{P})$, using the generators of Lorentz boosts~$\mathbf{K}$ to construct conjugate operators. Our Mourre estimates yield the following limiting absorption principle:

\begin{thm} \label{thm:LAP}
	Let $U:\mathcal{P}\to\mathfrak{B}(\mathcal{H})$ be a strongly continuous unitary representation of the Poincaré group $\mathcal{P}=\mathcal{L}\ltimes\mathbb{R}^d$ on a Hilbert space $\mathcal{H}$, $P=(P_0,\mathbf{P})$ the generators of the translation subgroup~$U|_{\mathbb{R}^d}$, $E$ the joint spectral measure of $P$, and $\mathbf{K}$ the generators of Lorentz boosts. Assume that the energy-momentum operators $P$ obey the spectrum condition (see Assumption~\ref{ass:SpectrumCondition}). For $\kappa>0$, define the following Lorentz-invariant sets (see Figure~\ref{fig:Skappa}):
	\begin{align} \label{eq:Skappa}
		S_{\kappa} = \{ \Lambda_1(t_1)\dots\Lambda_s(t_s)(p_0,\mathbf{0}) \mid t_1,\dots,t_s\in\mathbb{R}, p_0\in [0,\kappa] \}^{-},
	\end{align}	
	where $\Lambda_j(t_j)$ are the Lorentz boosts in the spatial direction $j\in\{1,\dots,s=d-1\}$ and $\{\dots\}^{-}$ denotes the closure in $\mathbb{R}^d$. For all compact subsets $I_0\subset (\kappa,\infty)$ and $I_j \subset\mathbb{R}\backslash\{0\}$, for every $\nu>1/2$,
	\begin{align}
		\sup_{\lambda \in I_0, \mu > 0} \| E(S_\kappa) \jap{K_j}^{-\nu} (P_0-\lambda\mp\I\mu)^{-1} \jap{K_j}^{-\nu} E(S_\kappa) \| &< \infty, \\
		\sup_{\lambda \in I_j, \mu > 0} \| E(S_\kappa) \jap{K_j}^{-\nu} (P_j-\lambda\mp\I\mu)^{-1} \jap{K_j}^{-\nu} E(S_\kappa) \| &< \infty. \label{eq:LAP2}
	\end{align}
\end{thm}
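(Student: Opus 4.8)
\emph{Overall strategy.} The plan is to run Mourre's conjugate operator method for the operators $L_j^{\pm}:=P_0\pm P_j$, for $P_j$, and for $P_0$, in each case with a self-adjoint conjugate operator manufactured from the single boost generator $K_j$, and then to read off the limiting absorption principle from the abstract theory of Amrein--Boutet de Monvel--Georgescu \cite{amrein1996}: once one has the requisite $C^{1,1}$ (or $C^{1+\varepsilon}$) regularity together with a strictly positive, error-free Mourre estimate on the relevant spectral interval, the sandwiched resolvent bounds with weights $\jap{K_j}^{-\nu}$, $\nu>1/2$, follow. The organising observation is the light-cone structure of the Poincaré algebra in the $j$-th spatial direction: with $L_j^{\pm}=P_0\pm P_j$ the boost $K_j$ acts as a \emph{dilation generator}, so $L_j^{\pm}$ -- and, through them, $P_j$ -- are the operators for which $K_j$ is tailor-made, while the cut-off $E(S_\kappa)$ supplies the mass bound that is needed to treat $P_0$.

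\emph{Commutation relations and regularity.} First I would record, on the G\aa rding domain (or via Nelson's analytic vectors for the representation $U$), the relations $[P_0,\I K_j]=-P_j$, $[P_j,\I K_j]=-P_0$, $[P_k,\I K_j]=0$ for $k\notin\{0,j\}$, and $[P_\mu,P_\nu]=0$; equivalently $[L_j^{\pm},\I K_j]=\mp L_j^{\pm}$, that is $\e^{\I tK_j}L_j^{\pm}\e^{-\I tK_j}=\e^{\mp t}L_j^{\pm}$. By the spectrum condition $P_0\ge|P_j|$, hence $L_j^{\pm}\ge0$. Since $S_\kappa$ is Lorentz invariant, $E(S_\kappa)$ commutes with all of $U(\mathcal P)$, in particular with $P$, with $K_j$, and with every resolvent below, so I may (and will) work throughout on the sub-representation $\ran E(S_\kappa)$, on which the Lorentz-invariant mass operator $M=\sqrt{P_0^2-\mathbf P^2}$ satisfies $0\le M\le\kappa$. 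Regularity is then cheap: $[L_j^{\pm},\I K_j]=\mp L_j^{\pm}$ shows that $\e^{\I tK_j}$ preserves the form domain of $L_j^{\pm}$ and that all iterated commutators remain proportional to $L_j^{\pm}$, so $L_j^{\pm}\in C^\infty(K_j)$; and because $[P_j,\I K_j]$ and $[P_0,\I K_j]$ close up among the mutually $K_j$-bounded operators $\{P_0,P_j\}$, one also gets $P_j\in C^\infty(K_j)$. It is cleanest to argue these invariance statements through the one-parameter group $\e^{\I tK_j}$ rather than through unbounded commutator identities.

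\emph{Mourre estimates.} For $P_j$ and any compact $I_j\subset\mathbb R\setminus\{0\}$ the estimate is immediate and error-free: $E_{I_j}(P_j)[P_j,\I(-K_j)]E_{I_j}(P_j)=E_{I_j}(P_j)P_0E_{I_j}(P_j)\ge\mathrm{dist}(0,I_j)\,E_{I_j}(P_j)$, using $P_0\ge|P_j|\ge\mathrm{dist}(0,I_j)$ on $\ran E_{I_j}(P_j)$. Likewise, for $L_j^{\pm}$ and compact $I\subset(0,\infty)$, $E_I(L_j^{\pm})[L_j^{\pm},\I(\mp K_j)]E_I(L_j^{\pm})=E_I(L_j^{\pm})L_j^{\pm}E_I(L_j^{\pm})\ge(\inf I)\,E_I(L_j^{\pm})$. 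The case of $P_0$ is where $E(S_\kappa)$ is essential: on $\ran E(S_\kappa)E_{I_0}(P_0)$ with $I_0\subset(\kappa,\infty)$ one has $\mathbf P^2=P_0^2-M^2\ge(\inf I_0)^2-\kappa^2>0$ uniformly, so the momentum does not vanish on the relevant energy shell. Using this together with the light-cone decomposition -- splitting by the sign of $P_j$ and exploiting that on $\{\pm P_j\ge0\}$ the operator $L_j^{\pm}=P_0\pm P_j$ is bounded below by $\inf I_0>\kappa$, hence bounded away from its threshold $0$ -- I would extract a strictly positive Mourre estimate for $P_0$, localised by $E(S_\kappa)$, with a conjugate operator still comparable to $\jap{K_j}$.

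\emph{Conclusion and the main obstacle.} Given the regularity of Step~2 and the Mourre estimates of Step~3, the abstract limiting absorption principle supplies, for each $\nu>1/2$, a bound on $\jap{K_j}^{-\nu}(H-\lambda\mp\I\mu)^{-1}\jap{K_j}^{-\nu}$ uniform over $\lambda$ in the respective compact interval and $\mu>0$, with $H=P_0$ on $I_0$ and $H=P_j$ on $I_j$; inserting $E(S_\kappa)$, which commutes with $H$ and with the resolvent, then yields the two displayed inequalities. The hard part will be the $P_0$ Mourre estimate: in contrast to $L_j^{\pm}$ and $P_j$, the naive commutator $[P_0,\I K_j]=-P_j$ is not sign-definite and degenerates on the set $\{P_j=0\}$ (which, for $d\ge3$, is not avoided by localising $P_0$), so one genuinely has to trade on the mass cut-off $M\le\kappa$ -- which is precisely where $I_0\subset(\kappa,\infty)$ enters -- and then to check that the passage from the $L_j^{\pm}$-resolvents to the $P_0$-resolvent on $\ran E(S_\kappa)E_{I_0}(P_0)$ is effected by a spectral change of variables that remains non-degenerate, uniformly in $\lambda\in I_0$, all while keeping the conjugate operator controlled by $K_j$; I expect verifying this to be the technically most demanding point of the proof.
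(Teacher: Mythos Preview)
Your treatment of the momentum bound \eqref{eq:LAP2} is essentially the paper's: both restrict to the invariant subspace $\ran E(S_\kappa)$, take $K_j$ (up to sign) as conjugate operator for $P_{j,\kappa}:=E(S_\kappa)P_j$, verify $P_{j,\kappa}\in C^\infty(K_j)$ from the group law $\e^{\I tK_j}\e^{\I xP_j}\e^{-\I tK_j}=\e^{\I x(\cosh(t)P_j-\sinh(t)P_0)}$, and obtain the strict Mourre estimate from $[P_{j,\kappa},\I K_j]=P_{0,\kappa}\ge|P_{j,\kappa}|$ on spectral subspaces away from~$0$. One clarification worth making explicit: the cut-off $E(S_\kappa)$ is indispensable already here not primarily because of the mass bound but because of a domain issue---without it $D(P_0)\subsetneq D(P_j)$ is possible, and then $P_j\notin C^1(K_j)$ since $[P_j,\I K_j]=P_0$ does not extend continuously to $D(P_j)$.

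For the energy bound your plan is genuinely different from the paper's and is left incomplete at the decisive point. You correctly diagnose the obstruction---$[P_0,\I K_j]$ equals $\pm P_j$, which is sign-indefinite and, as you yourself note, vanishes on a set that the energy localisation does not avoid when $d\ge3$---but the proposed remedy (pass to the $K_j$-homogeneous operators $L_j^\pm$, split by the sign of $P_j$, and transfer the LAP back to $P_0$ by a ``spectral change of variables'') is not executed, and there is no mechanism in sight for turning a LAP for each of $L_j^\pm$ into one for $P_0=\tfrac12(L_j^++L_j^-)$. The paper takes a constructive route instead, adapted from \cite[Lemma~7.6.4]{amrein1996}: choose $\theta\in C_c^\infty((\kappa,\infty))$ real, set $F(P_\kappa)=\theta(P_{0,\kappa})/P_{1,\kappa}$, and form the symmetrised operator
\[
A_\kappa=\tfrac12\bigl(F(P_\kappa)K_1+K_1F(P_\kappa)\bigr)
\]
on $D(K_1)$. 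One shows (via Nelson's commutator theorem with $N=K_1^2+1$) that $A_\kappa$ is essentially self-adjoint, that $P_{0,\kappa}\in C^\infty(\overline{A_\kappa})$, and that the commutator collapses to the exact identity $[P_{0,\kappa},\I\overline{A_\kappa}]=\theta(P_{0,\kappa})$. Choosing $\theta\equiv1$ on a neighbourhood of $I_0$ gives a strict Mourre estimate with no remainder, and the abstract LAP produces the bound with weight $\jap{\overline{A_\kappa}}^{-\nu}$; finally one trades this for $\jap{K_1}^{-\nu}$ using that $\jap{\overline{A_\kappa}}^{-\nu}\jap{K_1}^{\nu}$ is bounded and commutes with $E(S_\kappa)$. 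This explicit conjugate-operator construction is the missing ingredient in your outline.
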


\begin{figure}
	\begin{center}
		\begin{tikzpicture}		
			\fill [gray!20, domain=-5.292:5.292, variable=\x] (-6,6) -- plot ({\x}, {sqrt(\x*\x+8)}) -- (6,6) -- (0,0) -- cycle;
			
			\draw [color=gray, line width=0.5pt] (0,0) -- (6,6);
			\draw [color=gray, line width=0.5pt] (0,0) -- (-6,6);
			
			\draw [domain=-5.292:5.292, smooth, variable=\x, gray, line width=0.5pt] plot ({\x}, {sqrt(\x*\x+8)});
			
			\draw [color=black, ->, line width=0.5pt] (0,0) -- (0,6) node [above] {$p_0$};
			\draw [color=black, ->, line width=0.5pt] (-6,0) -- (6,0) node [right] {$\mathbf{p}$};
			
			\node [below=3pt] at (1.4,2.6) {$S_\kappa$};
			\node at (0,2.78){--};
			\node [right=3pt] at (0,2.8){$\kappa$};
		\end{tikzpicture}
	\end{center}
	\caption{The set $S_\kappa\subset V_+$ is invariant under Lorentz boosts.}
	\label{fig:Skappa}
\end{figure}
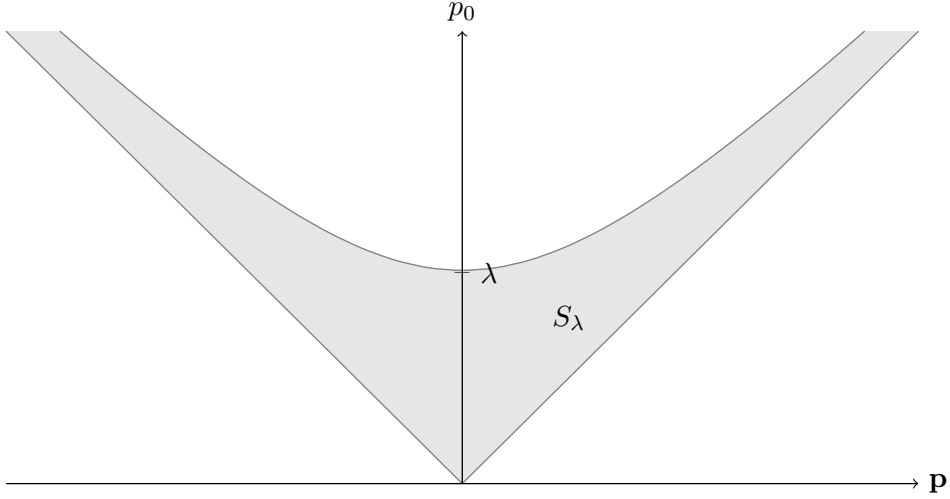

We illustrate our line of argument for the momentum operators $\mathbf{P}$. Formally, the momentum operator $P_j$ and the generator of Lorentz boosts $K_j$ in the spatial direction $j$ satisfy the following commutation relation:
\begin{align} \label{eq:CommutationRelationMomentumLorentzBoost}
	[P_j,\I K_j] = P_0.
\end{align}
By virtue of the spectrum condition, the energy $P_0$ is strictly positive on spectral subspaces of $P_j$ that are separated from 0. To make sense of the commutator and to apply the results of Mourre theory, it is necessary to demonstrate that $P_j$ lies in the regularity classes $C^k(K_j)$ (see Definition~\ref{defn:RegularityClasses}). However, this is not generally the case: because the inclusion $D(P_0) \subset D(P_j)$ of domains, which follows from the spectrum condition, can be proper, the form defined by the commutator $[P_j,\I K_j]$ on $D(P_j)\cap D(K_j)$ does not necessarily have a continuous extension to $D(P_j)$. To address this problem, we introduce the Lorentz-invariant sets $S_\kappa$, which cover the closed forward light cone $V_+$. Within the sets $S_\kappa$, the energy stays bounded relative to the momentum, ensuring that the operators $P_{0,\kappa} = E(S_\kappa)P_0$ and $P_{j,\kappa} = E(S_\kappa) P_j$ are bounded relative to each other. Moreover, the commutation relation \eqref{eq:CommutationRelationMomentumLorentzBoost} is preserved with $P_0$ and $P_j$ replaced by $P_{0,\kappa}$ and $P_{j,\kappa}$, respectively, because $E(S_\kappa)$ commutes with $K_j$. Additionally, we have $D(P_{0,\kappa}) = D(P_{j,\kappa})$ and $P_{j,\kappa} \in C^\infty(K_j)$ (see Proposition~\ref{prop:RegularityMomentumOperators}). The limiting absorption principle \eqref{eq:LAP2} can then be derived from the standard results of Mourre theory.

As a corollary of Theorem~\ref{thm:LAP}, we provide a new proof of the well-known theorem asserting that the energy-momentum spectra are purely absolutely continuous if translation-invariant vectors are removed \cite{maison1968}. More precisely, we prove the following proposition, where $\mathcal{H}_{\mathrm{ac}}(A)$ denotes the absolutely continuous spectral subspace of a commuting family of self-adjoint operators $A$ and $x\cdot y=x_0y_0-\mathbf{x}\cdot\mathbf{y}$, $x,y\in\mathbb{R}^d$, denotes the Minkowski scalar product.

\begin{prop} \label{prop:ACLorentzCovariance}
	Let $U$ be as in Theorem~\ref{thm:LAP} and $e\in\mathbb{R}^d\backslash\{0\}$ a spacetime vector such that $e\cdot e\neq 0$ if $d=2$. Assume that the energy-momentum operators $P=(P_0,\mathbf{P})$ obey the spectrum condition. If $Q_0$ denotes the projection onto the subspace of translation-invariant vectors, then
	\begin{align}\label{eq:ACLorentzCovariance}
		\mathcal{H}=Q_0\mathcal{H}\oplus \mathcal{H}_{\mathrm{ac}}(e\cdot P) = Q_0\mathcal{H}\oplus \mathcal{H}_{\mathrm{ac}}(\mathbf{P}).
	\end{align}
\end{prop}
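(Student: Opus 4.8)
The plan is to deduce Proposition~\ref{prop:ACLorentzCovariance} from the limiting absorption principle of Theorem~\ref{thm:LAP} by a covering argument, together with the standard fact that a limiting absorption principle on the range of a spectral projection implies absolute continuity of that part of the spectrum. Let me treat the two claimed decompositions in turn, starting with $\mathbf{P}$, where I would work one component at a time.

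First, fix a spatial direction $j$ and consider $P_j$. The orthogonal complement of $Q_0\mathcal{H}$ decomposes, via the spectral measure $E$ of $P$, according to whether $P_j=0$ or $P_j\neq 0$; but the spectrum condition forces $P\in\overline{V_+}$, and on the translation-invariant complement one checks (using Lorentz covariance, exactly as in the discussion preceding the theorem) that the only way to have $\mathbf{p}=\mathbf 0$ on a nonzero subspace orthogonal to $Q_0\mathcal H$ would require $p_0=0$ as well, hence $p=0$, hence a translation-invariant vector — a contradiction. So $E(\{\mathbf p=\mathbf 0\})\mathcal H=Q_0\mathcal H$, and on $(Q_0\mathcal H)^\perp$ we may restrict attention to $\{\mathbf p\neq\mathbf 0\}=\bigcup_j\{p_j\neq 0\}$. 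Now cover $\mathbb R^d$ by the Lorentz-invariant sets: $\overline{V_+}=\bigcup_{n\in\mathbb N} S_n$ and, correspondingly, $\mathbb R\setminus\{0\}=\bigcup_{n}\bigcup_{m}I_{j,n,m}$ with each $I_{j,n,m}$ a compact subset of $\mathbb R\setminus\{0\}$. For a vector $\psi$ in the range of $E(S_n)E(\{p_j\in I_{j,n,m}\})$, the bound~\eqref{eq:LAP2} gives, for $\nu>1/2$, a uniform bound on $\langle E(S_n)\jap{K_j}^{-\nu}(P_j-\lambda\mp\I\mu)^{-1}\jap{K_j}^{-\nu}E(S_n)\rangle$ over $\lambda\in I_{j,n,m}$, $\mu>0$; sandwiching with $\jap{K_j}^\nu\psi$ (assumed first in a dense set of vectors with $\jap{K_j}^\nu\psi\in\mathcal H$, then extended), this says the Borel transform $\langle\psi,(P_j-\lambda-\I\mu)^{-1}\psi\rangle$ of the spectral measure $\mu_\psi$ of $P_j$ at $\psi$ has boundary values in $L^\infty_{\mathrm{loc}}$ on $I_{j,n,m}$, which by the standard de~la~Vallée-Poussin / Stone-formula argument implies $\mu_\psi$ is absolutely continuous there. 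Taking unions over $n,m$ and over $j$, and using that these ranges span $(Q_0\mathcal H)^\perp$, yields $(Q_0\mathcal H)^\perp\subset\mathcal H_{\mathrm{ac}}(\mathbf P)$; the reverse inclusion is trivial since $Q_0\mathcal H\subset\mathcal H_{\mathrm{pp}}(\mathbf P)$.

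For the first decomposition, $e\cdot P$, the argument is analogous but uses the bound on $(P_0-\lambda\mp\I\mu)^{-1}$ together with Lorentz covariance. If $e$ is timelike one may, by applying a Lorentz boost (which is implemented unitarily by $U$ and commutes with nothing in particular but maps $e\cdot P$ to a multiple of $P_0$), reduce to $e\cdot P=c\,P_0$ and invoke the first estimate of Theorem~\ref{thm:LAP} on the sets $S_\kappa$ with the compact energy intervals $I_0\subset(\kappa,\infty)$; as $\kappa$ and $I_0$ vary these cover $(0,\infty)$, and one must separately argue that $E(\{p_0=0\})\mathcal H\cap(Q_0\mathcal H)^\perp=0$ — again a consequence of the spectrum condition, since $p_0=0$ on $\overline{V_+}$ forces $p=0$. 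If $e$ is spacelike one boosts it to a purely spatial direction and uses~\eqref{eq:LAP2} instead; the exclusion $e\cdot e\neq 0$ in dimension $d=2$ is precisely to rule out the lightlike case, for which no boost brings $e\cdot P$ to either $P_0$ or a $P_j$ and the light cone $\overline{V_+}$ is itself one-dimensional along a null ray so $e\cdot P$ can vanish on a large subspace. A lightlike $e$ in $d\geq 3$ is handled by writing $e\cdot P$ as a limit/combination that still only vanishes on $Q_0\mathcal H$ and approximating by the non-null cases, or more cleanly by noting $e\cdot P = \tfrac12\big((e+f)\cdot P\big)+\tfrac12\big((e-f)\cdot P\big)$-type manipulations reducing to the timelike case — this is the point requiring the most care.

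The main obstacle I anticipate is not the absolute-continuity mechanics — that is the textbook passage from a limiting absorption principle to absence of singular (and singular continuous) spectrum, via boundary values of the Borel transform — but rather the bookkeeping needed to glue the \emph{local, projection-truncated} estimates of Theorem~\ref{thm:LAP} into a \emph{global} statement on all of $(Q_0\mathcal H)^\perp$. Concretely: (i) verifying that $\bigcup_\kappa S_\kappa=\overline{V_+}$ and that, on $(Q_0\mathcal H)^\perp$, the spectral subspace where $p_j=0$ for all $j$ (resp.\ $p_0=0$, resp.\ $e\cdot p=0$) is null — each of these is a short argument from the spectrum condition plus Lorentz covariance, but they are the substantive inputs; and (ii) in the timelike/spacelike reductions for $e\cdot P$, tracking how $E(S_\kappa)$ transforms under the boost (it does not commute with the boost, but the boosted projection is again of the form $E(S_{\kappa'})$ up to the relevant covariance, so the estimates survive). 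Once these are in place, a countable union over the covering sets and over $j$ finishes the proof.
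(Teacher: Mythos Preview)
Your outline has the right shape for the time-like and space-like cases of $e\cdot P$ (reduce by a Lorentz boost to $P_0$ or $P_j$, invoke the limiting absorption principle, and pass from boundary values of the resolvent to absolute continuity), but there are three genuine gaps.

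\textbf{(a) Joint versus marginal absolute continuity of $\mathbf{P}$.} From the LAP for $P_j$ you correctly extract $\psi\in\mathcal{H}_{\mathrm{ac}}(P_j)$ for $\psi$ in the range of $E(S_n)E(\{p_j\in I\})$. But $\mathcal{H}_{\mathrm{ac}}(\mathbf{P})$ denotes the \emph{joint} absolutely continuous subspace, and $\bigcap_j\mathcal{H}_{\mathrm{ac}}(P_j)\neq\mathcal{H}_{\mathrm{ac}}(\mathbf{P})$ in general (take $P_1=P_2$: each is a.c., the joint measure sits on the diagonal). Taking the union over $j$ therefore does not give $(Q_0\mathcal{H})^\perp\subset\mathcal{H}_{\mathrm{ac}}(\mathbf{P})$. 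The paper closes this gap by building a single injective operator $\jap{K_1}^{-\nu}\cdots\jap{K_s}^{-\nu}$ that is locally $\mathbf{P}_\kappa$-smooth on product regions $J_1\times\cdots\times J_s$ (Corollary~\ref{cor:AbsolutelyContinuousSpectrum}); the key structural input, which you do not use, is that $K_i$ commutes strongly with $P_{j,\kappa}$ for $i\neq j$, so the resolvent product factorises.

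\textbf{(b) The ``$\mathbf{p}=0\Rightarrow p=0$'' step is not soft.} You assert that $E(\{\mathbf{p}=0\})\mathcal{H}=Q_0\mathcal{H}$ follows from ``the spectrum condition plus Lorentz covariance'', but the spectrum condition only gives $p_0\geq|\mathbf p|$, so $\mathbf p=0$ allows any $p_0\geq 0$. The paper supplies this via the virial theorem (Lemma~\ref{lem:TranslationInvariantVectors}): if $P_jf=0$ then $\langle f,P_{0,\kappa}f\rangle=\langle f,[P_{j,\kappa},\I K_j]f\rangle=0$, forcing $P_0f=0$. The same lemma is what makes the $e\cdot p=0$ step work in the light-like case for $d\geq3$ (there $P_0f=\mp P_1f$ forces $P_2f=0$ by the spectrum condition, and then Lemma~\ref{lem:TranslationInvariantVectors} gives $f\in Q_0\mathcal{H}$). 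This is not mere bookkeeping.

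\textbf{(c) The light-like case.} Neither approximation by non-null $e$ nor writing $e\cdot P$ as a sum of time-like pieces yields absolute continuity of $e\cdot P$ itself: absolute continuity is not preserved under limits or sums of self-adjoint operators. The paper's argument is direct and short: from \eqref{eq:CommutationEnergyLorentz}--\eqref{eq:CommutationMomentumLorentz} one gets $\e^{\I tK_1}(P_0\pm P_1)\e^{-\I tK_1}=\e^{\mp t}(P_0\pm P_1)$, i.e.\ $P_0\pm P_1$ is $\pm K_1$-homogeneous, so $[P_0\pm P_1,\pm\I K_1]=P_0\pm P_1$ and a Mourre estimate holds on any interval bounded away from $0$ with conjugate operator $\pm K_1$.

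One small correction: $S_\kappa$ is Lorentz-invariant by construction, so $E(S_\kappa)$ \emph{does} commute with every boost; your concern in (ii) is unfounded, and in fact this commutation is exactly what the paper exploits.
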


The first identity in \eqref{eq:ACLorentzCovariance} was originally proved by Maison~\cite[Satz~2]{maison1968} for $d=4$ through the application of Wigner's theorem. The second identity (absolute continuity of the joint momentum spectrum) has been proved under differing assumptions in the realm of quantum field theory. Buchholz and Fredenhagen~\cite[Proposition~2.2]{buchholz1982} utilised the locality principle to establish this identity, and Bachmann, Dybalski, and Naaijkens \cite[Lemma~4.16]{bachmann2016} presented a simplified proof under the additional assumption of the existence of a vacuum vector.

An immediate consequence of Proposition~\ref{prop:ACLorentzCovariance} and the Riemann--Lebesgue lemma is the following clustering property. If $f,g\in\mathcal{H}$ are two arbitrary vectors, then
\begin{align}
	\lim_{t\to \pm\infty} \scp{f}{U(te) g} = \scp{f}{Q_0 g}.
\end{align}
In quantum field theory, the clustering property is well-known for space-like directions (i.e.~$e\cdot e<0$) due to the locality principle. It is somewhat unexpected that this property is also valid for light-like (i.e.~$e\cdot e=0$) and time-like directions (i.e.~$e\cdot e>0$) under the assumption of Lorentz covariance.

In two spacetime dimensions (i.e.~$d=2$), the clustering property may not hold for light-like directions. This stems from the potential presence of massless excitations, as elaborated upon in the remark subsequent to Proposition~\ref{prop:ACLightLike}.

\begin{prop} \label{prop:ACLorentzCovarianceLightLike}
	Let $U$ be as in Theorem~\ref{thm:LAP} and let $E_{P_0\pm P_1}$ denote the spectral measure of $P_0\pm P_1$. If $d=2$, then
	\begin{align}
		\mathcal{H} = E_{P_0\pm P_1}(\{0\})\mathcal{H} \oplus \mathcal{H}_{\mathrm{ac}}(P_0\pm P_1).
	\end{align}
\end{prop}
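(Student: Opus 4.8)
The plan is to reduce Proposition~\ref{prop:ACLorentzCovarianceLightLike} to the light-cone version of Theorem~\ref{thm:LAP}, in close analogy with how Proposition~\ref{prop:ACLorentzCovariance} is deduced from the same theorem for time-like and space-like directions. Write $N_\pm = P_0 \pm P_1$ for the two light-cone components. Since $d = 2$, the single Lorentz boost $\Lambda_1(t)$ acts on $N_\pm$ by multiplication with $\e^{\pm t}$, so the light-cone components are eigenvector-like for the boost: formally $[N_\pm, \I K_1] = \pm N_\pm$. This dilation-type commutation relation is exactly the structure needed for a Mourre estimate of $N_\pm$ on spectral subspaces separated from $0$, with $K_1$ (or a function thereof) as conjugate operator. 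I would first record that the spectrum condition forces $N_\pm \geq 0$ with $\ker N_\pm = E_{N_\pm}(\{0\})\mathcal{H}$, and that on the range of $E_{N_\pm}((\delta,\infty))$ the commutator form is strictly positive, giving a global (not merely local) Mourre estimate away from $0$.

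The key steps, in order, are as follows. First, verify the regularity $N_\pm \in C^\infty(K_1)$: here the situation is actually simpler than in the general momentum case, because in $d = 2$ the sets $S_\kappa$ already cover a full cone and, restricted to spectral subspaces of $N_\pm$ bounded away from $0$ and bounded above, $N_\pm$ and $K_1$ generate a representation of the $ax+b$ group, so the required domain and commutator-invariance statements follow as in Proposition~\ref{prop:RegularityMomentumOperators} (one truncates with $E(S_\kappa)$ exactly as there, noting $S_\kappa$ is boost-invariant hence commutes with $K_1$, and that on $E(S_\kappa)\mathcal{H}$ one has $N_\pm$ bounded). Second, extract from $[N_\pm, \I K_1] = \pm N_\pm$ a strictly positive commutator estimate $E_{N_\pm}(J)[N_\pm,\I K_1]E_{N_\pm}(J) \geq c\, E_{N_\pm}(J)$ for every compact $J \subset (0,\infty)$, with $c = \inf J > 0$. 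Third, invoke the standard Mourre machinery (limiting absorption principle plus the Virial theorem, precisely the abstract results already used for Theorem~\ref{thm:LAP}) to conclude that $N_\pm$ has empty singular continuous spectrum and no eigenvalues in $(0,\infty)$, i.e.\ $E_{N_\pm}((0,\infty))\mathcal{H} \subseteq \mathcal{H}_{\mathrm{ac}}(N_\pm)$. Fourth, observe that $0$ is the only possible point mass: any eigenvector of $N_\pm$ with eigenvalue $\lambda > 0$ is excluded by the absence of eigenvalues just established, and $\lambda < 0$ is excluded by $N_\pm \geq 0$; hence $\mathcal{H}_{\mathrm{pp}}(N_\pm) = E_{N_\pm}(\{0\})\mathcal{H}$ and the orthogonal decomposition $\mathcal{H} = E_{N_\pm}(\{0\})\mathcal{H} \oplus \mathcal{H}_{\mathrm{ac}}(N_\pm)$ follows.

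The main obstacle I anticipate is the regularity class verification for $N_\pm$ with respect to $K_1$, i.e.\ showing $N_\pm \in C^\infty(K_1)$ rather than merely $C^1$ or $C^{1,1}$, since Mourre's limiting absorption principle requires at least $C^{1,1}$ (or $C^2$) regularity and the naive commutator $[N_\pm, \I K_1] = \pm N_\pm$ need not extend continuously to $D(N_\pm)$ because $D(P_0) \subset D(N_\pm)$ can be proper in the same way flagged for the general case. I would handle this by the same localisation device as in the main text: replace $N_\pm$ by $N_{\pm,\kappa} = E(S_\kappa) N_\pm$, note that $E(S_\kappa)$ commutes with $K_1$ (boost-invariance of $S_\kappa$) and that $N_{\pm,\kappa}$ is \emph{bounded} on $\mathcal{H}$ in $d = 2$ because within $S_\kappa$ the energy is bounded, which trivially gives $N_{\pm,\kappa} \in C^\infty(K_1)$ with commutator $\pm N_{\pm,\kappa}$; then let $\kappa \to \infty$ and use that the sets $S_\kappa$ exhaust $V_+$ so that $E(S_\kappa) \to I$ strongly, transferring the spectral conclusion from each $N_{\pm,\kappa}$ to $N_\pm$ on $\mathcal{H}$. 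A minor additional point to check is that $N_\pm$ and $N_{\pm,\kappa}$ have the same spectral type on the relevant subspaces, which follows because $E(S_\kappa)$ reduces $N_\pm$ and $\bigcup_\kappa S_\kappa = V_+$ carries the full spectral measure.
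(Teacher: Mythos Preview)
Your overall strategy---exploit that $N_\pm = P_0 \pm P_1$ satisfies the dilation-type relation $[N_\pm,\I K_1] = \pm N_\pm$, deduce a Mourre estimate on every compact $J \subset (0,\infty)$, and conclude absolute continuity away from $0$---is exactly the paper's proof (Proposition~\ref{prop:ACLightLike}). Your steps~2--4 coincide with it.

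There is, however, a genuine error in your regularity step. You claim that on $E(S_\kappa)\mathcal{H}$ the operator $N_\pm$ is bounded because ``within $S_\kappa$ the energy is bounded.'' This is false: in $d=2$ the set $S_\kappa$ is $\{p \in V_+ : p_0^2 - p_1^2 \leq \kappa^2\}$, and along any mass hyperbola (or the light rays) both $p_0 + p_1$ and $p_0 - p_1$ range over all of $(0,\infty)$. The paper only asserts that on $S_\kappa$ the energy is bounded \emph{relative to the momentum}, i.e.\ $P_{0,\kappa} \leq C_\kappa(1 + |\mathbf{P}_\kappa|)$, not absolutely. So your proposed route to $C^\infty(K_1)$ via boundedness of $N_{\pm,\kappa}$ does not go through.

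The fix is much simpler than any $S_\kappa$-detour, and it is what the paper actually does. The group identity $\e^{\I t K_1}\e^{\I x N_\pm}\e^{-\I tK_1} = \e^{\I \e^{\mp t} x N_\pm}$ says precisely that $N_\pm$ is $(\pm K_1)$-homogeneous in the sense of the example following Definition~\ref{defn:MourreEstimate}. That example already records that homogeneity implies $H \in C^\infty(A)$: concretely, $\e^{\I tK_1}(N_\pm + \I)^{-1}\e^{-\I tK_1} = (\e^{\mp t} N_\pm + \I)^{-1}$, and the right-hand side is norm-smooth in $t$ because the resolvent is holomorphic off the real axis. Hence $N_\pm \in C^\infty(K_1)$ globally, with no localisation needed. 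The domain obstacle you anticipated---that $D(P_0) \subset D(N_\pm)$ may be proper---does not arise in the light-like case, because the commutator $[N_\pm,\I K_1] = \pm N_\pm$ already extends continuously to $D(N_\pm)$, unlike the momentum case $[P_j,\I K_j] = P_0$ where the right-hand side has the smaller domain.
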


Another operator of interest in quantum field theory is the mass operator $M=\sqrt{P_0^2-|\mathbf{P}|^2}$. We prove the absence of singular continuous spectrum for the mass operator if $U$ satisfies dilation covariance. A unitary representation $U:\mathbb{R}^d\to\mathfrak{B}(\mathcal{H})$ of the translation group $\mathbb{R}^d$ is dilation-covariant if a self-adjoint operator $D$ exists such that, for all $t\in\mathbb{R},x\in\mathbb{R}^d$,
\begin{align}
	\e^{\I tD} U(x) \e^{-\I tD} = U(\e^{-t}x),
\end{align}
that is, the adjoint action of $\e^{\I tD}$ scales the translation $x$ by the factor $\e^{-t}$.

\begin{prop} \label{prop:ACDilationCovariance}
	Let $U:\mathbb{R}^d\to\mathfrak{B}(\mathcal{H})$ be a strongly continuous unitary representation of the translation group $\mathbb{R}^d$. Assume that the generators $P=(P_0,\mathbf{P})$ of $U$ obey the spectrum condition. If $U$ is dilation-covariant, then, for every compact subset $I\subset (0,\infty)$ and every $\nu>1/2$, 
	\begin{align}\label{eq:LAPMass}
		\sup_{\lambda\in I, \mu>0} \|\jap{D}^{-\nu} (M-\lambda\mp\I\mu)^{-1} \jap{D}^{-\nu} \| < \infty.
	\end{align}
	The spectrum of $M$ is purely absolutely continuous in $(0,\infty)$.
\end{prop}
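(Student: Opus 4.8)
The plan is to derive the mass operator's limiting absorption principle from a Mourre estimate with $D$ as conjugate operator, following exactly the strategy already employed for the momentum operators but with $\mathbf{K}$ replaced by $D$. First I would establish the key commutation relation. The mass operator $M=\sqrt{P_0^2-|\mathbf{P}|^2}$ is a function of the translation generators, and since $\e^{\I tD}U(x)\e^{-\I tD}=U(\e^{-t}x)$ implies $\e^{\I tD}P\e^{-\I tD}=\e^{-t}P$ on the relevant domains, it follows that $\e^{\I tD}M\e^{-\I tD}=\e^{-t}M$, i.e.\ $D$ generates dilations on $M$ and formally $[M,\I D]=M$. By the spectrum condition, $M\geq 0$, and on spectral subspaces of $M$ separated from $0$ the operator $M$ is strictly positive, which is precisely the positivity needed for a Mourre estimate: for a compact interval $I\subset(0,\infty)$ with $\inf I = a>0$, one gets $E_M(I)[M,\I D]E_M(I) = E_M(I)M E_M(I)\geq a\,E_M(I)$.

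Next I would verify the regularity hypotheses of Mourre theory, namely that $M\in C^k(D)$ for sufficiently large $k$ (in fact $C^\infty(D)$), so that the abstract limiting absorption principle applies. Here the situation is actually cleaner than for the momentum operators: because $\e^{\I tD}$ scales $M$ by a scalar factor $\e^{-t}$, the unitary group $\e^{\I tD}$ maps $D(M)$ onto itself and $t\mapsto \e^{\I tD}(M+\I)^{-1}\e^{-\I tD} = (\e^{-t}M+\I)^{-1}$ is manifestly smooth in the strong (indeed norm-resolvent) sense, so no restriction to invariant subsets $S_\kappa$ is needed and no relative-boundedness subtlety arises. With the Mourre estimate and the $C^{1,1}(D)$ (or $C^2(D)$) regularity in hand, the standard result of Mourre theory—e.g.\ as in Amrein--Boutet de Monvel--Georgescu \cite{amrein1996}, the same reference invoked for Theorem~\ref{thm:LAP}—yields the bound \eqref{eq:LAPMass} for every $\nu>1/2$. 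The absence of singular continuous spectrum in $(0,\infty)$, and in fact pure absolute continuity there, then follows from the limiting absorption principle by the usual argument (a uniformly bounded boundary value of the resolvent on an interval implies the spectrum is purely absolutely continuous on that interval), together with a covering of $(0,\infty)$ by such compact intervals.

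I expect the main obstacle to be a careful treatment of domains and the precise sense in which $[M,\I D]=M$ holds, i.e.\ verifying the commutator identity as an identity of quadratic forms on $D(M)\cap D(D)$ and upgrading it to the operator-theoretic regularity statement $M\in C^1(D)$ with $[M,\I D]^\circ = M$ bounded relative to $M$. The scaling identity $\e^{\I tD}M\e^{-\I tD}=\e^{-t}M$ should make this routine—differentiating at $t=0$ gives the commutator—but one must be slightly careful that $M$ is defined via the spectral calculus applied to $(P_0,\mathbf{P})$ restricted to the forward cone, and that dilation covariance of $U|_{\mathbb{R}^d}$ transfers correctly to this functional-calculus construction; this is where the spectrum condition is used to ensure $M$ is well-defined and nonnegative. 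A minor additional point is that one should confirm $M$ has no eigenvalue in $(0,\infty)$ embedded in the continuous spectrum, but this too is a direct consequence of the limiting absorption principle on each compact subinterval. Aside from these domain bookkeeping issues, the proof is a straightforward application of the abstract Mourre machinery with an unusually well-behaved conjugate operator.
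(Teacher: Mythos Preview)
Your proposal is correct and follows essentially the same route as the paper: recognise that dilation covariance makes $M$ $D$-homogeneous, so that $\e^{\I tD}(M+\I)^{-1}\e^{-\I tD}=(\e^{-t}M+\I)^{-1}$ gives $M\in C^\infty(D)$ and $[M,\I D]=M$, whence the Mourre estimate on any bounded open subset of $(0,\infty)$ and then Theorem~\ref{thm:LimitingAbsorptionPrinciple} and Corollary~\ref{cor:AbsolutelyContinuousSpectrum} yield \eqref{eq:LAPMass} and pure absolute continuity. The paper packages exactly this argument via the example after Definition~\ref{defn:MourreEstimate}, and your observation that no $S_\kappa$-restriction is needed here is right and matches the paper's treatment.
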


Given that the spectrum of the mass operator $M\geq 0$ is contained in the interval $[0,\infty)$, the proposition implies that the pure point spectrum of $M$ must be empty or $\{0\}$, while the singular continuous spectrum is empty. Under the assumptions of Proposition~\ref{prop:ACLorentzCovariance}, the mass operator may have other eigenvalues than 0. The assumption of dilation covariance excludes additional mass shells within the energy-momentum spectrum.

The structure of the paper is as follows. In Section~\ref{sec:LocallySmoothOperators}, we define locally smooth operators for a commuting family of self-adjoint operators, and we relate the existence of a locally smooth operator to the regularity of the joint spectrum. In Section~\ref{sec:Mourre}, we review essential results of Mourre's conjugate operator method. We explain how to obtain locally smooth operators based on a strictly positive commutator estimate. In Section~\ref{sec:SpectralPropertiesEMOperators}, we present the proofs of our results.

\section{Locally smooth operators} \label{sec:LocallySmoothOperators}

Let $U:\mathbb{R}^n\to\mathfrak{B}(\mathcal{H})$ be a strongly continuous unitary representation of $\mathbb{R}^n$ on a Hilbert space $\mathcal{H}$ with self-adjoint generators $H=(H_1,\dots,H_n)$. We denote by $D(H)$ the intersection of the domains of $H_1,\dots,H_n$, by $E$ the spectral measure of the family $H$, and by $\sigma(H)\subset\mathbb{R}^n$ its joint spectrum. We consider $D(H)$ as a Banach space equipped with the graph norm.

\begin{defn}\label{defn:LocallySmooth}
	A continuous operator $T:D(H)\to\mathcal{H}$ is \textbf{locally $H$-smooth} on an open set $J\subset\mathbb{R}^n$ if, for every compact subset $K\subset J$, a constant $C_{K}$ exists such that, for all $f\in\mathcal{H}$,
	\begin{align}\label{eq:HsmoothOperator}
		\int_{\mathbb{R}^{n}} \|TU(x)E(K)f\|^2 \diff x \leq C_{K} \|f\|^2.
	\end{align}
\end{defn}

In \cite[Definition~A.9]{kruse2024}, we defined locally $H$-smooth operators by demanding that all subfamilies of $H$ are locally $H$-smooth in the sense of the above definition. The advantage of this stronger definition is that it allows locally $H$-smooth operators to be equivalently characterised by resolvent estimates which resemble a limiting absorption principle (see \cite[Theorem~A.12]{kruse2024} and Proposition~\ref{prop:OptimalConstant1} below for the case $n=1$). In this paper, we prefer the weaker formulation of Definition~\ref{defn:LocallySmooth} because it is sufficient for analysing the regularity of the joint spectrum $\sigma(H)$. 

The following proposition is a straightforward generalisation of \cite[Theorem~XIII.23]{reed1978}.

\begin{prop} \label{prop:LocallySmoothACSpectrum}
	If $T$ is locally $H$-smooth on $J$, then $E(J)\overline{\ran(T^*)} \subset E(J)\mathcal{H}_{\mathrm{ac}}(H)$. If, additionally, $\ker(T)=\{0\}$, then the joint spectrum of $H$ is purely absolutely continuous in $J$.
\end{prop}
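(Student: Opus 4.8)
The plan is to adapt the proof of \cite[Theorem~XIII.23]{reed1978} to the several-variable setting, using in place of the one-dimensional Poisson-integral criterion for absolute continuity the Plancherel characterisation of finite Borel measures on $\mathbb{R}^n$ whose Fourier transform belongs to $L^2$. The argument is entirely local in $J$: since $\mathcal{H}_{\mathrm{ac}}(H)$ is a closed subspace reducing every spectral projection, it suffices to treat a fixed compact $K\subset J$ and then let $K\uparrow J$.

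First I would replace the (possibly unbounded) operator $T$ by a bounded surrogate. For compact $K$ the projection $E(K)$ maps $\mathcal{H}$ continuously into $D(H)$ with the graph norm, since $\|H_iE(K)f\|\le(\sup_{\lambda\in K}|\lambda_i|)\|f\|$; hence $S_K:=TE(K)$ is bounded on $\mathcal{H}$. Using $U(x)E(K)=E(K)U(x)$ and $E(K)^2=E(K)$, Definition~\ref{defn:LocallySmooth} and the Cauchy--Schwarz inequality give, for all $g,h\in\mathcal{H}$,
\[
  \int_{\mathbb{R}^n}\bigl|\langle S_K^*g,U(x)h\rangle\bigr|^2\diff x=\int_{\mathbb{R}^n}\bigl|\langle g,TU(x)E(K)h\rangle\bigr|^2\diff x\le C_K\|g\|^2\|h\|^2 .
\]
The left-hand integrand is, up to reflection $x\mapsto-x$, the Fourier transform of the finite complex measure $\nu:=\langle S_K^*g,E(\cdot)h\rangle$ on $\mathbb{R}^n$; thus $\widehat{\nu}\in L^2(\mathbb{R}^n)$. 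Next I would invoke the elementary fact that a finite complex Borel measure on $\mathbb{R}^n$ with $L^2$ Fourier transform is absolutely continuous with respect to Lebesgue measure: regarded as a tempered distribution it equals the inverse Fourier transform of its Fourier transform, which by Plancherel is an $L^2$, hence $L^1_{\mathrm{loc}}$, function, and a measure that coincides with an $L^1_{\mathrm{loc}}$ function as a distribution has that function as density. Specialising to $h=S_K^*g$ shows that the spectral measure $\langle S_K^*g,E(\cdot)S_K^*g\rangle$ of $S_K^*g$ is absolutely continuous, i.e.
\[
  \ran(S_K^*)\subset\mathcal{H}_{\mathrm{ac}}(H)\qquad\text{for every compact }K\subset J .
\]

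From this both assertions follow. For the first, a short computation with the duality pairing shows $E(K)T^*\subset S_K^*$, so $E(K)T^*g\in\mathcal{H}_{\mathrm{ac}}(H)$ for all $g\in D(T^*)$; letting $K\uparrow J$ and using that $\mathcal{H}_{\mathrm{ac}}(H)$ is closed and $E(J)$ bounded gives $E(J)\,\overline{\ran(T^*)}\subset\mathcal{H}_{\mathrm{ac}}(H)$, and as this set is invariant under $E(J)$ it lies in $E(J)\mathcal{H}_{\mathrm{ac}}(H)$. For the second, $\ran(S_K^*)\subset\mathcal{H}_{\mathrm{ac}}(H)$ yields $\mathcal{H}_{\mathrm{ac}}(H)^{\perp}\subset\ran(S_K^*)^{\perp}=\ker(TE(K))$; when $\ker(T)=\{0\}$ one has $\ker(TE(K))=E(\mathbb{R}^n\setminus K)\mathcal{H}$, so letting $K\uparrow J$ (along which $E(K)\to E(J)$ strongly) gives $\mathcal{H}_{\mathrm{ac}}(H)^{\perp}\subset E(\mathbb{R}^n\setminus J)\mathcal{H}$, equivalently $E(J)\mathcal{H}\subset\mathcal{H}_{\mathrm{ac}}(H)$, which is exactly pure absolute continuity of the joint spectrum in $J$.

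I expect the only real obstacle to be the passage carried out in the second paragraph: local $H$-smoothness controls $\int_{\mathbb{R}^n}\|TU(x)E(K)f\|^2\diff x$, but this does not become a Fourier transform of a bona fide finite measure until the unbounded $T$ has been absorbed into the bounded operator $S_K=TE(K)$ by means of the compact spectral cut-off $E(K)$. Once that is done, the Plancherel criterion for absolute continuity and the bookkeeping involving the limits $K\uparrow J$ and the reducing subspace $\mathcal{H}_{\mathrm{ac}}(H)$ are routine.
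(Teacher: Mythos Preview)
Your proof is correct and follows essentially the same route as the paper: both localise to compact $K\subset J$, use local $H$-smoothness to show that the Fourier transform of the relevant spectral measure is square-integrable and hence absolutely continuous, and then pass to the limit $K\uparrow J$. Your introduction of the bounded surrogate $S_K=TE(K)$ and the orthocomplement argument for the second assertion are minor packaging differences from the paper's direct computation with $f_j=E(K_j)T^*g_j$ and its use of the decomposition $\mathcal{H}=\ker(T)\oplus\overline{\ran(T^*)}$.
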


\begin{proof}
	Let $f\in E(J)\overline{\ran(T^*)}$. Due to the inner regularity of the spectral measure $E$, we can approximate $f$ by a sequence $\{f_j\}_{j\in\mathbb{N}}$ such that, for every $j\in\mathbb{N}$, $f_j\in E(K_j)\ran(T^*)$ for a compact subset $K_j\subset J$. If, for every $j\in\mathbb{N}$, $f_j \in E(J)\mathcal{H}_{\mathrm{ac}}(H)$, then also $f\in E(J)\mathcal{H}_{\mathrm{ac}}(H)$ because $E(J)\mathcal{H}_{\mathrm{ac}}(H)$ is a closed subspace of $\mathcal{H}$. We write $f_j=E(K_j)T^*g_j$ with $g_j\in D(T^*)$. Let $\mu_{f_j}$ be the spectral measure associated with $f_j$, and let $F_j$ be its inverse Fourier transform:
	\begin{align}
		F_j(x) = \frac{1}{(2\pi)^n} \int_{\mathbb{R}^n} \e^{\I p\cdot x} \diff\mu_{f_j}(p) = \frac{1}{(2\pi)^n} \scp{E(K_j)T^*g_j}{U(x)f_j}.
	\end{align}
	The function $F_j$ is square-integrable because $|F_j(x)|\leq (2\pi)^{-n}\|g_j\| \|TU(x)E(K_j)f_j\|$ and $T$ is locally $H$-smooth on $J$ by assumption. Thus, $\diff\mu_{f_j}(p)=\hat{F}_j(p)\diff p$ is an absolutely continuous measure.
	
	To prove the second statement, we apply the decomposition $\mathcal{H}=\ker(T)\oplus \overline{\ran(T^*)}$. If $\ker(T)=\{0\}$, then $E(J)\mathcal{H}=E(J)\overline{\ran(T^*)}\subset E(J)\mathcal{H}_{\mathrm{ac}}(H)$. We conclude $E(J)\mathcal{H}=E(J)\mathcal{H}_{\mathrm{ac}}(H)$ because $\mathcal{H}_{\mathrm{ac}}(H)\subset \mathcal{H}$, that is, the joint spectrum of $H$ is purely absolutely continuous in $J$.
\end{proof}

It is possible to determine the constant $C_K$ for the bound \eqref{eq:HsmoothOperator}, which is useful to verify the $H$-smoothness of an operator $T$. The following proposition is proved as in \cite[Proposition~A.10]{kruse2024}.

\begin{prop}\label{prop:OptimalConstant}
	A continuous operator $T:D(H)\to\mathcal{H}$ is locally $H$-smooth on $J$ if and only if, for all compact subsets $K\subset J$,
	\begin{align}
		C_K^0 = 2^n \sup_{\lambda\in\mathbb{R}^n, \mu\in(0,1)^n} \mu_1\dots\mu_n \|TE(K)R(\lambda+\I\mu)\|^2 < \infty,
	\end{align}
	where
	\begin{align}
		R(\lambda+\I\mu) = \prod_{j=1}^n (H_j-\lambda_j-\I\mu_j)^{-1}.
	\end{align}
\end{prop}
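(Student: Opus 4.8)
The plan is to reduce everything to the bounded operator $S:=TE(K)\colon\mathcal{H}\to\mathcal{H}$, which satisfies $S=SE(K)$ with $K$ compact, and to prove the sharper statement $C_K^0=C_K$, where $C_K$ denotes the smallest admissible constant in~\eqref{eq:HsmoothOperator}; this settles the equivalence. First I would rewrite $C_K^0$ more symmetrically: since $S$ and $R(\lambda+\I\mu)$ are bounded for $\mu\in(0,\infty)^n$, one has $\mu_1\cdots\mu_n\norm{SR(\lambda+\I\mu)}^2=\mu_1\cdots\mu_n\norm{SR(\lambda+\I\mu)R(\lambda+\I\mu)^*S^*}=\mu_1\cdots\mu_n\norm{S\,\prod_{j=1}^n((H_j-\lambda_j)^2+\mu_j^2)^{-1}\,S^*}$, so that, introducing the (unnormalised) Poisson operators $B_{\lambda_j,\mu_j}:=2\mu_j\big((H_j-\lambda_j)^2+\mu_j^2\big)^{-1}$ of $H_j$, we obtain $C_K^0=\sup_{\lambda\in\mathbb{R}^n,\ \mu\in(0,1)^n}\norm{S\,\prod_{j=1}^n B_{\lambda_j,\mu_j}\,S^*}$. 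The heart of the argument is then the \emph{bridge identity}: for every $f\in\mathcal{H}$ and every $\mu\in(0,\infty)^n$,
\begin{align*}
	\int_{\mathbb{R}^n}\norm{SU(x)f}^2\prod_{j=1}^n\e^{-2\mu_j\lvert x_j\rvert}\diff x=\frac{1}{(2\pi)^n}\int_{\mathbb{R}^n}\norm{S\,\prod_{j=1}^n B_{\lambda_j,\mu_j}\,f}^2\diff\lambda .
\end{align*}
I would prove this by inserting the joint spectral resolution $U(x)f=\int\e^{\I p\cdot x}\diff E(p)f$, performing the elementary $x$-integral coordinatewise via $\int_{\mathbb{R}}\e^{\I s t}\e^{-2\mu\lvert t\rvert}\diff t=2\mu/(s^2+4\mu^2)$, splitting the kernel with the Poisson semigroup identity $2\mu/((p-q)^2+4\mu^2)=(2\pi)^{-1}\!\int_{\mathbb{R}}\big(2\mu/((p-\lambda)^2+\mu^2)\big)\big(2\mu/((q-\lambda)^2+\mu^2)\big)\diff\lambda$, and re-summing the double spectral integral.

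Granting the bridge identity, the implication ``$C_K^0<\infty\Rightarrow T$ locally $H$-smooth on $J$'' is elementary. For $\mu\in(0,1)^n$, factor $S\prod_j B_{\lambda_j,\mu_j}=\big(S\prod_j B_{\lambda_j,\mu_j}^{1/2}\big)\big(\prod_j B_{\lambda_j,\mu_j}^{1/2}\big)$ and estimate the operator norm of the first factor by $\big(\norm{S\prod_j B_{\lambda_j,\mu_j}S^*}\big)^{1/2}\le(C_K^0)^{1/2}$, which gives $\norm{S\prod_j B_{\lambda_j,\mu_j}f}^2\le C_K^0\,\scp{\prod_j B_{\lambda_j,\mu_j}f}{f}$. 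Integrating over $\lambda\in\mathbb{R}^n$, using $\int_{\mathbb{R}}B_{\lambda_j,\mu_j}\diff\lambda_j=2\pi$, and inserting into the bridge identity yields $\int_{\mathbb{R}^n}\norm{SU(x)f}^2\prod_j\e^{-2\mu_j\lvert x_j\rvert}\diff x\le C_K^0\norm{f}^2$; letting $\mu\downarrow 0$ (monotone convergence) gives~\eqref{eq:HsmoothOperator} with $C_K\le C_K^0$.

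For the converse I would appeal to the $\mathbb{R}^n$-analogue of Kato's theorem on smooth operators (for a single self-adjoint operator this is~\cite[Theorem~XIII.25]{reed1978}): local $H$-smoothness of the bounded operator $S=SE(K)$ forces the positive operator-valued measure $B\mapsto SE(B)S^*$ to be absolutely continuous on $\mathbb{R}^n$, with density $\rho\ge 0$ in $L^\infty$ satisfying $(2\pi)^n\operatorname{ess\,sup}_{\xi}\norm{\rho(\xi)}=C_K$, which one establishes by diagonalising the commuting family $H$ through a direct-integral decomposition of $\mathcal{H}$ and applying Plancherel to~\eqref{eq:HsmoothOperator}. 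Then $SU(x)S^*=\int_{\mathbb{R}^n}\e^{\I\xi\cdot x}\rho(\xi)\diff\xi$, and running the bridge computation in reverse identifies $S\prod_j B_{\lambda_j,\mu_j}S^*$ with the Poisson integral $\int_{\mathbb{R}^n}\rho(\xi)\prod_j 2\mu_j\big((\xi_j-\lambda_j)^2+\mu_j^2\big)^{-1}\diff\xi$ of $\rho$ at $\lambda$; since a Poisson average neither increases the essential supremum of the norm nor fails to recover it as $\mu\downarrow 0$, we get $\sup_{\lambda,\mu}\norm{S\prod_j B_{\lambda_j,\mu_j}S^*}=(2\pi)^n\operatorname{ess\,sup}_{\xi}\norm{\rho(\xi)}=C_K$, whence $C_K^0=C_K<\infty$. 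Combining the two directions, $C_K^0$ is finite precisely when $T$ is locally $H$-smooth on $J$, with $C_K^0=C_K$.

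The step I expect to require the most care is exactly this last input — the Kato-type description of $H$-smoothness through an $L^\infty$ spectral density — since it is the only part that is not a direct manipulation: it requires setting up the direct-integral decomposition of the commuting family $H$, checking that $S$ annihilates the singular subspace of $H$, and carrying out the Plancherel computation rigorously. Everything else — the algebraic rewriting of $C_K^0$, the bridge identity, the easy implication, and the Poisson-average estimate — is routine, and the argument then follows the pattern of~\cite[Proposition~A.10]{kruse2024}.
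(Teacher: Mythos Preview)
The paper does not actually prove this proposition; it defers entirely to \cite[Proposition~A.10]{kruse2024}, which you also cite as your template. Your proposal supplies an argument that is correct in substance, with one unnecessary detour and a couple of harmless constant slips.

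The forward implication ($C_K^0<\infty\Rightarrow$ locally $H$-smooth) via the bridge identity, the factorisation $S\prod_jB_{\lambda_j,\mu_j}^{1/2}$, and monotone convergence is clean and correct. Two intermediate constants are off by a factor of~$2$: one has $\int_{\mathbb R}\e^{\I st}\e^{-2\mu|t|}\diff t=4\mu/(s^2+4\mu^2)$, and correspondingly the Poisson semigroup identity should read $4\mu/((p-q)^2+4\mu^2)=(2\pi)^{-1}\int_{\mathbb R}B_{\lambda,\mu}(p)B_{\lambda,\mu}(q)\diff\lambda$. The two errors cancel, so your stated bridge identity and the bound $C_K\le C_K^0$ are right.

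For the converse you invoke a Kato-type $L^\infty$ spectral density obtained from a direct-integral decomposition of the commuting family $H$. This can be made to work, but it is heavier than the statement requires and is exactly the step you flag as delicate. The elementary route avoids it entirely: write each resolvent as a Laplace integral, $(H_j-\lambda_j-\I\mu_j)^{-1}=-\I\int_0^\infty\e^{\I\lambda_jt_j}\e^{-\mu_jt_j}\e^{-\I t_jH_j}\diff t_j$, take the product, apply $S=TE(K)$, and Cauchy--Schwarz in $t\in(0,\infty)^n$:
\begin{align*}
\|SR(\lambda+\I\mu)f\|^2
\le\Big(\int_{(0,\infty)^n}\e^{-2\mu\cdot t}\diff t\Big)\Big(\int_{(0,\infty)^n}\|SU(-t)f\|^2\diff t\Big)
\le\frac{C_K}{2^n\mu_1\cdots\mu_n}\,\|f\|^2.
\end{align*}
This gives $C_K^0\le C_K$ in one line and, combined with your forward direction, the equality $C_K^0=C_K$ you were aiming for --- without any direct-integral machinery or operator-valued Lebesgue differentiation.
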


If $n=1$, we can formulate a necessary and sufficient condition for $C_K^0<\infty$ in terms of a limiting absorption principle (see Theorem~\ref{thm:LimitingAbsorptionPrinciple}).

\begin{prop}\label{prop:OptimalConstant1}
	Let $n=1$. For every compact subset $K\subset\mathbb{R}$, $C_K^0<\infty$ if and only if
	\begin{align}\label{eq:UniformResolventEstimate}
		\sup_{\lambda\in K,\mu\in(0,1)} \|T\Im R(\lambda+\I\mu)T^*\| < \infty.
	\end{align}
\end{prop}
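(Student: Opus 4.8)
The plan is to identify $C_K^{0}$ with the supremum of the resolvent quantity in \eqref{eq:UniformResolventEstimate}, taken over \emph{all} $\lambda\in\mathbb{R}$ (and compressed, as $C_K^0$ is, by $E(K)$), and then to show that enlarging the range of $\lambda$ from $K$ to $\mathbb{R}$ changes nothing. For the first step I would use the operator identity
\[
  R(\lambda+\I\mu)R(\lambda+\I\mu)^{*}=\bigl((H-\lambda)^{2}+\mu^{2}\bigr)^{-1}=\mu^{-1}\,\Im R(\lambda+\I\mu),\qquad \mu>0,
\]
together with the facts that $E(K)$ commutes with every bounded function of $H$ and that $TE(K)$ is bounded (on $E(K)\mathcal{H}$ the graph norm of $H$ is equivalent to the Hilbert-space norm since $K$ is bounded, and $T$ is graph-continuous). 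These give
\[
  \mu\,\|TE(K)R(\lambda+\I\mu)\|^{2}
  =\mu\,\bigl\|TE(K)R(\lambda+\I\mu)R(\lambda+\I\mu)^{*}E(K)T^{*}\bigr\|
  =\bigl\|TE(K)\,\Im R(\lambda+\I\mu)\,E(K)T^{*}\bigr\|,
\]
and hence $C_K^{0}=2\sup_{\lambda\in\mathbb{R},\,\mu\in(0,1)}\|TE(K)\,\Im R(\lambda+\I\mu)\,E(K)T^{*}\|$. Thus the statement reduces to the claim that this supremum remains finite after $\lambda$ is confined to $K$; one implication is just the trivial restriction of the supremum.

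For the converse I would fix $\lambda\in\mathbb{R}$ and $\mu\in(0,1)$ and exploit that $h\mapsto(h-\lambda)^{2}$ is monotone on each of the half-lines $(-\infty,\lambda]$ and $[\lambda,\infty)$. Split $K=K_{\le}\cup K_{>}$ with $K_{\le}=K\cap(-\infty,\lambda]$, $K_{>}=K\cap(\lambda,\infty)$, and let $P_{\le}=E(K_{\le})$, $P_{>}=E(K_{>})$; these are orthogonal spectral projections summing to $E(K)$ and commuting with $H$. If $K_{\le}\ne\emptyset$, put $\lambda_{-}=\max K_{\le}\in K$: on $P_{\le}\mathcal{H}$ the spectral variable $h$ satisfies $h\le\lambda_{-}\le\lambda$, so $(h-\lambda)^{2}\ge(h-\lambda_{-})^{2}$ and therefore $\mu\bigl((h-\lambda)^{2}+\mu^{2}\bigr)^{-1}\le\mu\bigl((h-\lambda_{-})^{2}+\mu^{2}\bigr)^{-1}$; by the spectral theorem, $P_{\le}\,\Im R(\lambda+\I\mu)\,P_{\le}\le P_{\le}\,\Im R(\lambda_{-}+\I\mu)\,P_{\le}\le E(K)\,\Im R(\lambda_{-}+\I\mu)\,E(K)$. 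Symmetrically, with $\lambda_{+}=\min K_{>}\in K$, one gets $P_{>}\,\Im R(\lambda+\I\mu)\,P_{>}\le E(K)\,\Im R(\lambda_{+}+\I\mu)\,E(K)$ (a term dropped if the corresponding set is empty, and $\lambda_{\pm}=\lambda$ when $\lambda\in K$). Adding these and using $E(K)\,\Im R(\lambda+\I\mu)\,E(K)=P_{\le}\,\Im R(\lambda+\I\mu)\,P_{\le}+P_{>}\,\Im R(\lambda+\I\mu)\,P_{>}$, then sandwiching with $T$ and $T^{*}$---legitimate because $T$ is $H$-bounded, so $T\,\Im R(\lambda+\I\mu)^{1/2}$ is bounded and $T\,\Im R(\lambda+\I\mu)\,T^{*}$ is a genuine bounded positive operator, for which $0\le A\le B$ forces $\|TAT^{*}\|\le\|TBT^{*}\|$---I obtain
\[
  \bigl\|TE(K)\,\Im R(\lambda+\I\mu)\,E(K)T^{*}\bigr\|\le 2\sup_{\lambda'\in K,\,\mu\in(0,1)}\bigl\|TE(K)\,\Im R(\lambda'+\I\mu)\,E(K)T^{*}\bigr\|.
\]
Taking the supremum over $\lambda\in\mathbb{R}$ and substituting into the formula for $C_K^{0}$ completes the equivalence.

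I expect the genuinely delicate point to be precisely this transfer from $\lambda\in K$ to arbitrary $\lambda\in\mathbb{R}$, especially for $\lambda$ lying in a gap of $K$ with $\mu$ small: the resolution is that monotonicity of $(h-\lambda)^{2}$ on each half-line lets one dominate $\Im R(\lambda+\I\mu)$, compressed to $E(K)\mathcal{H}$, by $\Im R(\lambda_{\pm}+\I\mu)$ with the \emph{same} spectral parameter $\mu$, so that the admissible range $\mu\in(0,1)$ is never left and no blow-up is introduced---a cruder triangle-inequality estimate would have forced a larger $\mu$ and failed. The remaining ingredients---the resolvent identity, the boundedness of $TE(K)$, the elementary fact that a projection commuting with a positive operator $A$ satisfies $PAP\le A$, and the monotonicity of $A\mapsto TAT^{*}$---are routine.
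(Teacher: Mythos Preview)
Your reduction of $C_K^0$ to $2\sup_{\lambda\in\mathbb{R},\mu\in(0,1)}\|TE(K)\,\Im R(\lambda+\I\mu)\,E(K)T^*\|$ via the identity $\mu\,R(\lambda+\I\mu)R(\lambda+\I\mu)^*=\Im R(\lambda+\I\mu)$ is correct, and your monotonicity argument for the implication \eqref{eq:UniformResolventEstimate} $\Rightarrow C_K^0<\infty$ is clean and complete: the splitting $K=K_\le\cup K_>$ together with $PAP\le A$ and $\|TE(K)\Im R\,E(K)T^*\|\le\|T\Im R\,T^*\|$ indeed yields $C_K^0\le 4\sup_{\lambda\in K,\mu\in(0,1)}\|T\Im R(\lambda+\I\mu)T^*\|$. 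This is the direction actually used in the paper (in Corollary~\ref{cor:AbsolutelyContinuousSpectrum}), and your treatment of it matches in spirit the argument in \cite[Proposition~7.1.1]{amrein1996} to which the paper defers.

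The gap is in the direction you call ``trivial''. Restricting the supremum from $\lambda\in\mathbb{R}$ to $\lambda\in K$ gives only
\[
\sup_{\lambda\in K,\mu\in(0,1)}\|TE(K)\,\Im R(\lambda+\I\mu)\,E(K)T^*\|<\infty,
\]
whereas \eqref{eq:UniformResolventEstimate} demands the \emph{uncompressed} quantity $\|T\,\Im R(\lambda+\I\mu)\,T^*\|$, and the inequality $E(K)\Im R\,E(K)\le\Im R$ goes the wrong way here. Indeed, read pointwise in $K$ this implication can fail outright: for $H$ multiplication by $x$ on $L^2(\mathbb{R})$, $T=\mathbf{1}$ and $K=\{0\}$ one has $E(K)=0$, hence $C_K^0=0$, yet $\|\Im R(\I\mu)\|=\mu^{-1}$. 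The statement is meant (and is proved in \cite{amrein1996}) in the setting where $K$ ranges over compact subsets of an open $J$; then for $\lambda\in K$ you may enlarge $K$ to a compact $K'\subset J$ with $\operatorname{dist}(K,\mathbb{R}\setminus K')\ge\delta>0$, write
\[
T\,\Im R(\lambda+\I\mu)\,T^*=TE(K')\,\Im R\,E(K')T^*+T(1-E(K'))\,\Im R\,(1-E(K'))T^*,
\]
bound the first term by $C_{K'}^0/2$, and control the second uniformly using only the $H$-boundedness of $T$ together with $\sup_{h\notin K'}(1+h^2)\mu\bigl((h-\lambda)^2+\mu^2\bigr)^{-1}<\infty$ for $\lambda\in K$, $\mu\in(0,1)$. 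Your write-up should either supply this enlargement step or note explicitly that the equivalence is to be understood uniformly over compact $K\subset J$ as in \cite{amrein1996}.
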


The proof of Proposition~\ref{prop:OptimalConstant1} is contained in the proof of \cite[Proposition~7.1.1]{amrein1996}; see \cite[Theorem~A.12]{kruse2024} for a generalisation to $n\geq 2$. In the next section, we apply Proposition~\ref{prop:OptimalConstant} and Proposition~\ref{prop:OptimalConstant1} to construct locally smooth operators (see Corollary~\ref{cor:AbsolutelyContinuousSpectrum}).

\section{Mourre's conjugate operator method} \label{sec:Mourre}

Mourre's conjugate operator method is a mathematical tool to analyse the spectrum of a self-adjoint operator based on a positive commutator estimate. In this section, we outline the framework and state the most important results of the method, in particular the limiting absorption principle. At the end of this section, we apply Mourre's method to construct locally smooth operators. 

We introduce the following regularity classes, which are relevant for defining commutators.

\begin{defn} \label{defn:RegularityClasses}
	Let $A$ be a self-adjoint operator on $\mathcal{H}$ and let $k\in\mathbb{N}\cup\{\infty\}$. We denote by $C^k(A)$ the space of all self-adjoint operators $H$ such that $t\mapsto \e^{\I t A}(H+\I)^{-1}\e^{-\I t A}$ is a $C^k$-map in the strong operator topology. We denote by $C_\mathrm{u}^k(A)$ the subspace of operators $H$ for which the same map is $C^k$ in norm.
\end{defn}

We collect some properties of the class $C^1(A)$ from \cite[Theorem~6.2.10]{amrein1996}. If $H\in C^1(A)$, then $D(A)\cap D(H)$ is dense in $D(H)$ (equipped with the graph topology). Moreover, the sesquilinear form defined by the commutator $HA-AH$ on $D(A)\cap D(H)$ extends to a continuous sesquilinear form on $D(H)$. We denote the extended sesquilinear form by $[H,A]$. It is possible to characterise the class $C^1(A)$ in terms of the commutator $[H,A]$. 

\begin{prop}[{\cite[Theorem~6.2.10~(a)]{amrein1996}}] \label{prop:C1CommutatorCharacterisation}
	A self-adjoint operator $H$ belongs to the class $C^1(A)$ if and only if there exists $z\in \rho(H)$ such that $\{f\in D(A)\mid (H-z)^{-1}f\in D(A) \text{ and } (H-\overline{z})^{-1}f\in D(A)\}$ is a core for $A$ and, for all $f\in D(A)\cap D(H)$,
	\begin{align}
			|\scp{Hf}{Af}-\scp{Af}{Hf}|\leq c(\|Hf\|^2+\|f\|^2).
		\end{align}
\end{prop}

Having formalised the commutator of two self-adjoint unbounded operators, we are able to define strictly positive commutator estimates.

\begin{defn} \label{defn:MourreEstimate}
	A self-adjoint operator $H$ obeys a \textbf{Mourre estimate} on an open and bounded subset $J\subset\mathbb{R}$ if a self-adjoint operator $A$ (\textbf{conjugate operator}) exists such that $H\in C^1(A)$ and, for an $a>0$,
	\begin{align}
		E(J) [H, \I A] E(J) \geq a E(J),
	\end{align}
	where $E$ is the spectral measure of $H$.
\end{defn}

\begin{exmp}
	Let $H$ be $A$-homogeneous, that is, for every $t,x\in\mathbb{R}$, 
	\begin{align}\label{eq:AHomogeneous}
		\e^{\I t A}\e^{\I xH}\e^{-\I tA} = \e^{\I \e^{-t}x H}.
	\end{align}
	It is not difficult to show that \eqref{eq:AHomogeneous} implies $H\in C^\infty(A)$ and $[H,\I A] = H$. Moreover, for every open and bounded interval $J=(a,b)\subset(0,\infty)$, $0<a<b<\infty$,
	\begin{align}
		E(J)[H,\I A] E(J) = HE(J) \geq a E(J),
	\end{align}
	that is, $H$ obeys a Mourre estimate on $J$ with conjugate operator $A$. Similarly, for every open and bounded interval $I=(a,b)\subset(-\infty,0)$, $-\infty<a<b<0$,
	\begin{align}
		E(I)[H,-\I A] E(I) = -HE(I) \geq -b E(I),
	\end{align}
	that is, $H$ obeys a Mourre estimate on $I$ with conjugate operator $-A$.
\end{exmp}

A consequence of the Mourre estimate on $J$ is that $H$ has no eigenvalues in $J$. This fact emerges as a direct corollary of the virial theorem \cite[Proposition~7.2.10]{amrein1996}.

\begin{thm}[Virial theorem] \label{thm:VirialTheorem}
	Let $H\in C^1(A)$. If $f$ is an eigenvector of $H$, then $\scp{f}{[H,A]f}=0$.
\end{thm}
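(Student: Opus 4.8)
The plan is to exploit the defining property $H\in C^1(A)$, which gives that $t\mapsto \e^{\I tA}(H+\I)^{-1}\e^{-\I tA}$ is strongly $C^1$, and then to test the commutator identity against the eigenvector $f$. The standard route is to work with the bounded resolvent $R=(H+\I)^{-1}$ rather than with $H$ directly, since the regularity hypothesis is phrased in terms of $R$. First I would record that $H\in C^1(A)$ is equivalent to $R\in C^1(A)$ with the commutator relation $[R,\I A] = -R[H,\I A]R$ as a bounded operator (this is part of the $C^1(A)$ calculus from \cite[Sec.~6.2]{amrein1996}, which the excerpt has cited). Since $f$ is an eigenvector of $H$ with eigenvalue, say, $\lambda\in\mathbb{R}$, we have $Rf = (\lambda+\I)^{-1}f$, so $f$ is also an eigenvector of $R$.

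The key computation is then to evaluate the quadratic form $\scp{f}{[R,\I A]f}$ in two ways. On the one hand, using the $C^1(A)$ identity, $\scp{f}{[R,\I A]f} = -\scp{Rf}{[H,\I A]Rf} = -|\lambda+\I|^{-2}\scp{f}{[H,\I A]f}$. On the other hand, I would compute $\scp{f}{[R,\I A]f}$ directly from the definition of the commutator form: for $g,h\in D(A)$ one has $\scp{g}{[R,\I A]h} = \lim_{t\to 0} t^{-1}\scp{g}{(\e^{\I tA}R\e^{-\I tA}-R)h}$, but more usefully, since $R$ is bounded, the form $\scp{g}{[R,\I A]h}$ equals $\I(\scp{A g}{Rh}-\scp{R^* g}{A h})$ for $g,h\in D(A)$, extended by continuity in $R$-graph-norm (here $D(R)=\mathcal H$). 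Because $R$ is self-adjoint up to the $\pm\I$ shift — more precisely $R^* = (H-\I)^{-1}$ — some care is needed; the cleanest variant is to note that $\scp{f}{[R,\I A]f}$ with $Rf=(\lambda+\I)^{-1}f$ and $R^*f = (\lambda-\I)^{-1}f$ becomes, for $f\in D(A)$, $\I\big((\lambda+\I)^{-1}-(\lambda-\I)^{-1}\big)\scp{Af}{f}$, and since $f\in D(A)$ this inner product is real, so the whole expression is real and equals $\I\cdot(-2\I)(\lambda^2+1)^{-1}\|\cdot\|$-type real number. The point is simply that the two evaluations, combined with $\scp{f}{[H,\I A]f}$ being real (as $[H,\I A]$ is a symmetric form and $f$ lies in its form domain $D(H)$), force $\scp{f}{[H,\I A]f}=0$.

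The honest subtlety — and the main obstacle — is that the eigenvector $f$ need not lie in $D(A)$, so the manipulations above cannot be carried out literally on $f$. The resolution, which is the actual content of the argument in \cite[Prop.~7.2.10]{amrein1996}, is to introduce a regularisation: replace $A$ by the bounded operators $A_\varepsilon = A(1+\I\varepsilon A)^{-1} = (\I\varepsilon)^{-1}\big(1-(1+\I\varepsilon A)^{-1}\big)$, which converge strongly to $A$ on $D(A)$ as $\varepsilon\to 0$ and are uniformly bounded by $\varepsilon^{-1}$ in norm but, crucially, $A_\varepsilon$ is bounded so $\scp{f}{[H,A_\varepsilon]f}$ makes sense for the eigenvector $f$. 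One shows $\scp{f}{[H,A_\varepsilon]f}=0$ for each $\varepsilon$ (this is elementary: $\scp{f}{(HA_\varepsilon - A_\varepsilon H)f} = \lambda\scp{f}{A_\varepsilon f}-\lambda\scp{A_\varepsilon f}{f}$, and $\scp{f}{A_\varepsilon f}$ is real since $A_\varepsilon$ is bounded self-adjoint), and then passes to the limit. The passage to the limit is where $H\in C^1(A)$ is essential: it guarantees that $[H,\I A_\varepsilon]\to [H,\I A]$ as forms on $D(H)$ in a suitable sense (e.g.\ $\scp{f}{[H,\I A_\varepsilon]f}\to\scp{f}{[H,\I A]f}$ for $f$ in the form domain), using that $A_\varepsilon$ commutes with functions of $A$ and the uniform bound on $\|(H+\I)^{-1}[H,\I A_\varepsilon](H+\I)^{-1}\|$ that $C^1(A)$ provides. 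I would therefore structure the proof as: (i) reduce to the resolvent and record $Rf=(\lambda+\I)^{-1}f$; (ii) prove the identity for the regularised $A_\varepsilon$ where everything is bounded; (iii) invoke the $C^1(A)$ convergence $[H,\I A_\varepsilon]\to[H,\I A]$ to conclude. Since this is exactly \cite[Prop.~7.2.10]{amrein1996}, in the paper one may simply cite it, but the regularisation in step (ii)–(iii) is the conceptual heart.
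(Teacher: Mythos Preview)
The paper does not supply a proof of this theorem; it merely cites \cite[Proposition~7.2.10]{amrein1996}. Your sketch reproduces precisely the standard regularisation argument from that reference, so there is nothing to compare against on the paper's side.

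One small correction to your step~(ii): the operator $A_\varepsilon = A(1+\I\varepsilon A)^{-1}$ is bounded but \emph{not} self-adjoint (its adjoint is $A(1-\I\varepsilon A)^{-1}$), so the claim ``$\scp{f}{A_\varepsilon f}$ is real since $A_\varepsilon$ is bounded self-adjoint'' is false. Fortunately this does not matter: for \emph{any} bounded operator $B$ and an eigenvector $Hf=\lambda f$, the form $\scp{f}{[H,B]f}=\scp{Hf}{Bf}-\scp{B^*f}{Hf}=\lambda\scp{f}{Bf}-\lambda\scp{B^*f}{f}=0$, since $\scp{B^*f}{f}=\scp{f}{Bf}$ by the definition of the adjoint. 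So your step~(ii) goes through, with a simpler justification than you gave. The passage to the limit in step~(iii) is the genuine content and is handled exactly as you indicate, via the $C^1(A)$ hypothesis.
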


Next, we state the limiting absorption principle. To formulate the limiting absorption principle under optimal conditions, we introduce new regularity classes. Let $p\in[1,\infty]$ and $k\in\mathbb{N}$. For $0<s<k$, we denote by $\mathcal{C}^{s,p}(A)$ the real interpolation space \cite[(5.2.6)]{amrein1996}
\begin{align}
	\mathcal{C}^{s,p}(A) = (C^k_\mathrm{u}(A),C^0_{\mathrm{u}}(A))_{\theta,p}, \ \theta=1-s/k.
\end{align}
Moreover, we write $H\in C^{k+0}(A)$ \cite[p.~204]{amrein1996} if $H\in C^k(A)$ and if the operator-valued function $S(t)=\e^{\I tA}\mathrm{ad}_A^k((H+\I)^{-1})\e^{-\I tA}$ is Dini continuous in norm, that is, 
\begin{align}
	\int_{|t|<1} \|S(t)-S(0)\| \frac{\diff t}{t} < \infty.
\end{align}
We mention the following chain of inclusions \cite[(5.2.14)]{amrein1996}:
\begin{align}\label{eq:Inclusions}
	C^{k+1}(A) \subset C^{k+0}(A) \subset \mathcal{C}^{k,1}(A) \subset C^k_\mathrm{u}(A), \ k\in\mathbb{N}.
\end{align}
Note that $C^{k+1}(A)\subset C^{k+0}(A)$ as a consequence of the mean value theorem and the uniform boundedness principle.

We say that $H$ has a spectral gap if $\sigma(H)\neq \mathbb{R}$ and we write $\jap{A}$ for $\sqrt{A^2+1}$.

\begin{thm}[Limiting absorption principle] \label{thm:LimitingAbsorptionPrinciple}
	Let $H\in \mathcal{C}^{1,1}(A)$ if $H$ has a spectral gap and $H\in C^{1+0}(A)$ if $\sigma(H)=\mathbb{R}$. If $H$ obeys a Mourre estimate on $J$, then, for every compact subset $K\subset J$ and every $\nu>1/2$,
	\begin{align}\label{eq:LAP1}
		\sup_{\lambda\in K, \mu>0} \| \jap{A}^{-\nu} (H-\lambda\mp\I\mu)^{-1} \jap{A}^{-\nu} \| < \infty.
	\end{align}
\end{thm}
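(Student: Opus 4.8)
The statement is the abstract Mourre limiting absorption principle, the central output of the conjugate operator method; I would follow the differential-inequality strategy of \cite{amrein1996} (Mourre's original scheme sharpened to the optimal class $\mathcal{C}^{1,1}(A)$). The plan is to convert the Mourre estimate into a uniform bound on a regularised resolvent and then remove the regularisation. First I would localise in energy: choosing $g\in C_c^\infty(J)$ with $g\equiv 1$ on a neighbourhood of the compact set $K$, the operator $B:=g(H)[H,\I A]g(H)$ is bounded (since $H\in C^1(A)$ makes $[H,\I A]$ an $H$-bounded form that $g(H)$ truncates) and nonnegative, and the Mourre estimate upgrades to the strict lower bound $B\geq a\,g(H)^2$ because $g(H)=E(J)g(H)$.

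Next, for $z=\lambda\pm\I\mu$ with $\lambda\in K$, $\mu>0$, and a regularisation parameter $\epsilon\in(0,1]$, I would introduce the regularised resolvent $G_\epsilon(z)=(H-z-\I\epsilon B)^{-1}$, which exists and is bounded by $\mu^{-1}$ because $\I\epsilon B$ is a dissipative perturbation. From $G_\epsilon-G_\epsilon^{*}=2\I G_\epsilon^{*}(\mu+\epsilon B)G_\epsilon$ one reads off the a priori bound $\epsilon\,\norm{B^{1/2}G_\epsilon u}^2\leq \Im\scp{u}{G_\epsilon u}$, the quantitative substitute for the absence of eigenvalues already furnished qualitatively by the virial theorem (Theorem~\ref{thm:VirialTheorem}).

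The core of the argument is a differential inequality in $\epsilon$ for the weighted scalar functionals $\scp{\jap{A}^{-\nu}v}{G_\epsilon(z)\jap{A}^{-\nu}w}$, whose supremum over unit $v,w$ is the quantity to be bounded. Differentiating via $\tfrac{\diff}{\diff\epsilon}G_\epsilon=\I G_\epsilon B G_\epsilon$, I would insert the strict lower bound $B\geq a\,g(H)^2$ to extract the leading positive term, commute $\jap{A}^{-\nu}$ through $G_\epsilon$, and control the commutators $[G_\epsilon,\jap{A}^{-\nu}]$ using $H\in C^1(A)$. Together with the a priori bound, which costs a factor $\epsilon^{-1/2}$, this yields an estimate of the form $|F'(\epsilon)|\leq C\,\epsilon^{-1/2}(1+F(\epsilon))$ for the functional $F$, whose singularity at the origin is integrable. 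A Gronwall argument, integrating from $\epsilon=1$ (where boundedness is trivial) down to $\epsilon=0^{+}$, then produces a bound uniform in $\epsilon$, $\mu$, and $\lambda$. Letting $\epsilon\to0$ recovers the genuine resolvent $(H-z)^{-1}$ and establishes \eqref{eq:LAP1}, and the existence of boundary values as $\mu\to0$ follows from the uniformity.

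The hard part is making the remainder in the differential inequality genuinely integrable, and this is precisely where the sharp regularity hypotheses enter. A naive execution would demand a second commutator $[[H,\I A],\I A]$ as an operator, which need not exist; the interpolation class $\mathcal{C}^{1,1}(A)$ is the weakest class for which the remainder terms are Dini-controllable, so that the $\epsilon^{-1/2}$ singularity is not worsened and the inequality closes. When $\sigma(H)=\mathbb{R}$ there is no spectral gap with which to form resolvents of $H$ directly, and one instead works throughout with $(H+\I)^{-1}$ and the hypothesis $H\in C^{1+0}(A)$, exploiting the Dini continuity built into the definition of $C^{1+0}(A)$ to play the role that $\mathcal{C}^{1,1}(A)$ plays in the gapped case; the chain of inclusions \eqref{eq:Inclusions} confirms that these are the natural optimal assumptions.
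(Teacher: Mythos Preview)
The paper does not prove Theorem~\ref{thm:LimitingAbsorptionPrinciple}; it is stated as a known result and, in the remark immediately following it, is attributed to \cite[Theorem~7.4.1]{amrein1996} for the spectral-gap case under the $\mathcal{C}^{1,1}(A)$ hypothesis and to Sahbani \cite{sahbani1997} for the case $\sigma(H)=\mathbb{R}$ under the $C^{1+0}(A)$ hypothesis. So there is no ``paper's own proof'' to compare against.

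Your sketch is a faithful high-level outline of the differential-inequality method as developed in those references: energy localisation, the dissipative regularisation $G_\epsilon(z)=(H-z-\I\epsilon B)^{-1}$, the quadratic a~priori estimate, and the Gronwall-type inequality in $\epsilon$ are all the right ingredients, and your final paragraph correctly identifies that the passage from Mourre's original $C^2(A)$ assumption to $\mathcal{C}^{1,1}(A)$ is the genuinely nontrivial refinement. One minor imprecision: the resolvent $(H+\I)^{-1}$ is always available regardless of a spectral gap; the reason the gapless case is harder in \cite{amrein1996} is structural (their proof at various points uses a real $z\in\rho(H)$ and certain Besov-space machinery tied to it), which is why Sahbani's separate argument under the slightly stronger $C^{1+0}(A)$ condition is invoked. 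As a proof \emph{sketch} pointing to the literature, what you wrote is appropriate; as a self-contained proof it would of course need the full commutator and interpolation estimates of \cite[Chapter~7]{amrein1996} filled in.
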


\begin{rem}
	The assumption $H\in \mathcal{C}^{1,1}(A)$ is optimal on the Besov scale in the sense that counterexamples of the limiting absorption principle for operators possessing less regularity exist \cite[Section~7.B]{amrein1996}. A proof of the limiting absorption principle under the assumption $H\in \mathcal{C}^{1,1}(A)$ has been accomplished in the case that $H$ has a spectral gap \cite[Theorem~7.4.1]{amrein1996}. Sahbani \cite{sahbani1997} proved the limiting absorption principle under the more restrictive assumption $H\in C^{1+0}(A)$. It is an open problem to prove the limiting absorption principle for $H\in \mathcal{C}^{1,1}(A)$ if $\sigma(H)=\mathbb{R}$.
\end{rem}

According to Proposition~\ref{prop:LocallySmoothACSpectrum}, $H$ has no singular spectrum in $J$ if an injective locally $H$-smooth operator on $J$ exists. If each $H_i$ obeys a Mourre estimate with conjugate operator $A_i$ such that, for $i\neq j$, $A_i$ commutes strongly with $H_j$, then, for every $\nu>1/2$, $\jap{A_1}^{-\nu}\dots\jap{A_n}^{-\nu}$ is an injective operator that is locally $H$-smooth. This is a corollary of the limiting absorption principle and the results from Section~\ref{sec:LocallySmoothOperators}.

\begin{cor} \label{cor:AbsolutelyContinuousSpectrum}
	Consider a commuting family of self-adjoint operators $H=(H_1,\dots,H_n)$ and a corresponding family $A=(A_1,\dots,A_n)$ of self-adjoint operators such that, for every $i\in\{1,\dots,n\}$, $H_i\in \mathcal{C}^{1,1}(A_i)$ if $H_i$ has a spectral gap and $H_i\in C^{1+0}(A_i)$ if $\sigma(H_i)=\mathbb{R}$. If $H_i$ obeys a Mourre estimate on $J_i$ with conjugate operator $A_i$ and, for $i\neq j$, $A_i$ commutes strongly with $H_j$, then the joint spectrum of $H$ is purely absolutely continuous in the product region $J=J_1\times\dots\times J_n$.
\end{cor}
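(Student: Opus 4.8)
The plan is to combine the one-dimensional limiting absorption principle of Theorem~\ref{thm:LimitingAbsorptionPrinciple} (applied separately to each $H_i$) with the resolvent characterisation of local smoothness and Proposition~\ref{prop:LocallySmoothACSpectrum}. First I would fix the injective bounded operator
\begin{align*}
	T = \jap{A_1}^{-\nu}\cdots\jap{A_n}^{-\nu}, \qquad \nu > 1/2,
\end{align*}
which clearly has trivial kernel since each factor is injective and the factors commute with one another in the sense that $\jap{A_i}$ is a function of $A_i$ only. By Proposition~\ref{prop:LocallySmoothACSpectrum} it suffices to show that $T$ is locally $H$-smooth on the open product region $J = J_1\times\cdots\times J_n$.

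To establish local $H$-smoothness I would invoke Proposition~\ref{prop:OptimalConstant}: it is enough to bound, for every compact box $K = K_1\times\cdots\times K_n \subset J$,
\begin{align*}
	\mu_1\cdots\mu_n \, \bigl\| T E(K) R(\lambda+\I\mu) \bigr\|^2
\end{align*}
uniformly for $\lambda\in\mathbb{R}^n$, $\mu\in(0,1)^n$, where $R(\lambda+\I\mu) = \prod_{j=1}^n (H_j-\lambda_j-\I\mu_j)^{-1}$. The key structural input is that the $H_j$ commute strongly, and that $A_i$ commutes strongly with $H_j$ for $i\neq j$; hence $\jap{A_i}^{-\nu}$ commutes with $(H_j-\lambda_j-\I\mu_j)^{-1}$ for $i\neq j$, and likewise $E(K)$ commutes with everything in sight. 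This lets me distribute the factors: I can write the operator as a product of $n$ commuting-in-placement pieces, each of the shape $\jap{A_i}^{-\nu} E(K_i; H_i) (H_i-\lambda_i-\I\mu_i)^{-1}$ (with $E(K_i;H_i)$ the spectral projection of $H_i$ onto $K_i$, obtained by further restricting $E(K)$), times uniformly bounded remainders. For each single index $i$, since $H_i\in\mathcal{C}^{1,1}(A_i)$ (resp.\ $C^{1+0}(A_i)$) and obeys a Mourre estimate on $J_i$, Theorem~\ref{thm:LimitingAbsorptionPrinciple} gives a uniform bound on $\jap{A_i}^{-\nu}(H_i-\lambda_i\mp\I\mu_i)^{-1}\jap{A_i}^{-\nu}$ for $\lambda_i\in K_i$, and Proposition~\ref{prop:OptimalConstant1} translates this into the bound $\sup_{\lambda_i\in K_i,\mu_i\in(0,1)}\mu_i\|\jap{A_i}^{-\nu} E(K_i;H_i)(H_i-\lambda_i-\I\mu_i)^{-1}\|^2 < \infty$ (one also uses that outside $K_i$ the spectral projection kills the relevant part, so only $\lambda_i$ near $K_i$ contribute and the $K_i$-localised resolvent is controlled). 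Multiplying the $n$ single-variable bounds yields the desired bound on the product, so $C_K^0 < \infty$ and $T$ is locally $H$-smooth on $J$.

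Finally, with $T$ injective and locally $H$-smooth on $J$, Proposition~\ref{prop:LocallySmoothACSpectrum} immediately gives that the joint spectrum of $H$ is purely absolutely continuous in $J$, which is the claim.

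I expect the main obstacle to be the bookkeeping in the middle step: carefully justifying that one may split the $n$-fold resolvent product $TE(K)R(\lambda+\I\mu)$ into a product of commuting single-variable blocks and uniformly bounded leftover factors, and that the spectral localisation $E(K)$ can be replaced, block by block, by $E(K_i;H_i)$ so that each block falls under the scope of Proposition~\ref{prop:OptimalConstant1}. The one-dimensional analysis is then just Theorem~\ref{thm:LimitingAbsorptionPrinciple} plus Proposition~\ref{prop:OptimalConstant1}; the only subtlety there is handling the case $\sigma(H_i)=\mathbb{R}$ versus $H_i$ having a spectral gap, which is exactly the dichotomy built into the hypotheses. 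No genuinely new estimate is required beyond what is already assembled in Sections~\ref{sec:LocallySmoothOperators} and~\ref{sec:Mourre}; the corollary is essentially a packaging of those tools in the several-variables setting.
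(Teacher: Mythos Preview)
Your proposal is correct and follows essentially the same route as the paper: define $T=\jap{A_1}^{-\nu}\cdots\jap{A_n}^{-\nu}$, reduce to compact boxes $K=K_1\times\cdots\times K_n$, factor $C_K^0$ using the commutation hypotheses into a product of one-dimensional constants, and bound each via Proposition~\ref{prop:OptimalConstant1} together with Theorem~\ref{thm:LimitingAbsorptionPrinciple}. One small slip: you justify injectivity of $T$ by saying the factors commute, but nothing in the hypotheses forces the $A_i$ to commute with one another; injectivity holds simply because each $\jap{A_i}^{-\nu}$ is individually injective, so their composition is too.
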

	
\begin{proof}
	We demonstrate that the injective operator $T=\jap{A_1}^{-\nu}\dots\jap{A_n}^{-\nu}$ is locally $H$-smooth on~$J$ for every $\nu>1/2$. If this claim is proved, the statement follows from Proposition~\ref{prop:LocallySmoothACSpectrum}. We show that the constant $C_K^0$ from Proposition~\ref{prop:OptimalConstant} is finite for every compact subset $K\subset J$. It suffices to consider compact hyperrectangles $K=K_1\times \dots \times K_n$, where $K_i\subset J_i$ are compact intervals, because every compact subset $K\subset J$ can be covered by finitely many compact hyperrectangles in~$J$. If $E_i$ denotes the spectral measure of $H_i$, then
	\begin{align}
		E(K) = E_1(K_1)\dots E_n(K_n).
	\end{align}
	By assumption, $A_i$ commutes strongly with $H_j$ if $i\neq j$; hence,
	\begin{align}
		C_K^0 &= 2^n \sup_{\lambda\in \mathbb{R}^n, \mu\in(0,1)^n} \mu_1\dots\mu_n \|\jap{A_1}^{-\nu}\dots\jap{A_n}^{-\nu}E(K)R(\lambda+\I\mu)\|^2 \notag \\
		&\leq \prod_{i=1}^{n}2 \sup_{\lambda_i\in \mathbb{R},\mu_i\in(0,1)} \mu_i\|\jap{A_i}^{-\nu} E_i(K_i) R_i(\lambda_i+\I \mu_i)\|^2.
	\end{align}
	Each factor on the r.h.s.~is finite due to Proposition~\ref{prop:OptimalConstant1} and the limiting absorption principle (Theorem~\ref{thm:LimitingAbsorptionPrinciple}).
\end{proof}

\section{Spectral analysis of relativistic energy-momentum operators}
\label{sec:SpectralPropertiesEMOperators}

In this section, we establish Mourre estimates for the energy-momentum operators $P=(P_0,\mathbf{P})$ and verify the limiting absorption principle of Theorem~\ref{thm:LAP} (see Theorem~\ref{thm:LAPMomentum} and Theorem~\ref{thm:LAPEnergy} below). Moreover, we analyse the spectra of the energy-momentum operators: in Proposition~\ref{prop:ACMomentum} for the momentum operators $\mathbf{P}$, in Proposition~\ref{prop:ACEnergy} for the energy operator $P_0$, and in Proposition~\ref{prop:ACLightLike} for the operators $e\cdot P$, where $e$ is a light-like vector. The relativistic mass operator $M$ and its spectrum is studied in Section~\ref{ssec:DilationCovariantRepresentations}.

\subsection{Representations of the Poincaré group}

The Poincaré group is the semi-direct product $\mathcal{P}=\mathcal{L}\ltimes\mathbb{R}^d$ of the Lorentz group $\mathcal{L}=O(d-1,1)$ and the translation group $\mathbb{R}^d$. Its multiplication law is defined as follows:
\begin{align}
	(\lambda_1,a_1)\cdot (\lambda_2,a_2) = (\lambda_1 \lambda_2, a_1 + \lambda_1 a_2), \ \lambda_1, \lambda_2\in\mathcal{L}, a_1,a_2\in\mathbb{R}^d.
\end{align}
Let $U:\mathcal{P}\to\mathfrak{B}(\mathcal{H})$ be a strongly continuous unitary representation of the Poincaré group on a Hilbert space $\mathcal{H}$. The generators of the translation subgroup $U|_{\mathbb{R}^d}$ are the energy-momentum operators $P=(P_0,\mathbf{P})$, such that, for $x\in\mathbb{R}^d$, $U(1,x)=\e^{\I x\cdot P}$, where $x\cdot P = x_0 P_0 - \mathbf{x}\cdot\mathbf{P}$. Let $E$ be the joint spectral measure of $P$. A vector $f\in\mathcal{H}$ is translation-invariant if $U(1,x)f=f$ for all $x\in\mathbb{R}^d$. We denote by $Q_0=E(\{0\})$ the projection onto the subspace of translation-invariant vectors.

Let $\Lambda_1(t),\dots,\Lambda_s(t)$ be the Lorentz boosts in the space directions $1,\dots,s=d-1$, for example,
\begin{align} \label{eq:LorentzBoosts}
	\Lambda_1(t) = \begin{pmatrix}
		\cosh(t) & \sinh(t) & 0 & \cdots & 0 \\ 
		\sinh(t) & \cosh(t) & 0 & \cdots & 0 \\ 
		0 & 0 & 1 &  & 0 \\ 
		\vdots & \vdots & & \ddots & \\ 
		0 & 0 & 0 & & 1
	\end{pmatrix},
\end{align}
and let $K_1,\dots,K_s$ be the self-adjoint generators of the Lorentz boosts, that is, $U(\Lambda_j(t),0)=\e^{\I tK_j}$, $j\in\{1,\dots,s\}$. We mention the following identities, which are consequences of the multiplication law ($t\in\mathbb{R}$, $x\in\mathbb{R}^d$):
\begin{align}
	\e^{\I t K_j}\e^{-\I x_0 P_0}\e^{-\I t K_j} &= \e^{-\I x_0 (\cosh(t)P_0-\sinh(t)P_j)}, \label{eq:CommutationEnergyLorentz} \\
	\e^{\I t K_j}\e^{-\I x_j P_j}\e^{-\I t K_j} &= \e^{-\I x_j (\cosh(t)P_j-\sinh(t)P_0)}. \label{eq:CommutationMomentumLorentz}
\end{align}
Formally, by differentiating in $x=0$ and $t=0$, these identities are equivalent to the commutation relations $[P_0,\I K_j]=P_j$ and $[P_j,\I K_j]=P_0$.

Our analysis below crucially depends on the spectrum condition. The spectrum condition states that the energy-momentum spectrum is contained within the forward light cone, which is the largest Lorentz-invariant set where the energy is nonnegative. This condition is a quantum field theory axiom in the frameworks of Wightman and Haag–Kastler. It will be assumed for the remainder of this section.

\begin{ass}[Spectrum condition] \label{ass:SpectrumCondition}
	The joint spectrum of the energy-momentum operators~$P$ is a subset of the closed forward light cone $V_+=\{p=(p_0,\mathbf{p})\in\mathbb{R}^d \mid p_0 \geq |\mathbf{p}| \}$ (i.e. $\sigma(P)\subset V_+$).
\end{ass}

From the spectrum condition, it follows that the momentum operators are relatively bounded relative to the energy operator $P_0$, that is, $|\mathbf{P}|\leq P_0$. This implies the inclusion $D(P_0) \subset D(P_j)$ of domains, which can be proper. As explained in the introduction, this is problematic for defining the commutators $[P_0, \I K_j]$ and $[P_j,\I K_j]$ in a way suitable for Mourre theory. Specifically, $P_0$ and $P_j$ are not necessarily elements of the regularity classes $C^k(K_j)$. 

The Lorentz-invariant sets $S_\kappa$, defined in \eqref{eq:Skappa}, cover the light cone $V_+$ (i.e.~$\bigcup_{\kappa>0}S_\kappa = V_+$). These sets are constructed so that the energy within $S_\kappa$ remains bounded relative to the momentum. Consequently, the operators $P_{0,\kappa} = E(S_\kappa)P_0$ and $\mathbf{P}_\kappa=E(S_\kappa)\mathbf{P}$ are bounded relative to each other:
\begin{align}
	|\mathbf{P}_\kappa| \leq P_{0,\kappa} \leq C_\kappa (1+|\mathbf{P}_\kappa|).
\end{align}
Moreover, the subspaces $E(S_\kappa)\mathcal{H}$ cover the Hilbert space $\mathcal{H}$, and it holds that
\begin{align}
	\mathcal{H}_{\mathrm{ac}}(\mathbf{P}) = \bigcup_{\kappa>0} \mathcal{H}_{\mathrm{ac}}(\mathbf{P}_\kappa), \ \mathcal{H}_{\mathrm{ac}}(P_0) = \bigcup_{\kappa>0} \mathcal{H}_{\mathrm{ac}}(P_{0,\kappa}).
\end{align}
In the following three subsections, we prove Mourre estimates and absence of singular continuous spectrum for the momentum operators $\mathbf{P}$, the energy operator $P_0$, and the light-cone operators $e\cdot P$, where $e$ is a light-like vector.

\subsubsection{Momentum operators}

The following proposition, which proves that $P_{0,\kappa}$ and $P_{j,\kappa}$ are elements of the regularity class $C^\infty(K_j)$, is essential for establishing the Mourre estimate and applying the results of Section~\ref{sec:Mourre}.

\begin{prop}\label{prop:RegularityMomentumOperators}
	For every $\kappa>0$ and $j\in\{1,\dots,s\}$, $P_{0,\kappa}\in C^\infty(K_j)$, $P_{j,\kappa}\in C^\infty(K_j)$, and 
	\begin{align}
		[P_{0,\kappa},\I K_j] = P_{j,\kappa}, \\
		[P_{j,\kappa},\I K_j]=P_{0,\kappa}.
	\end{align}
\end{prop}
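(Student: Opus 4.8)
The plan is to verify the $C^\infty(K_j)$ membership directly from the definition via the conjugation identities \eqref{eq:CommutationEnergyLorentz} and \eqref{eq:CommutationMomentumLorentz}. The key point, established in the discussion preceding the proposition, is that on the reducing subspace $E(S_\kappa)\mathcal{H}$ the operators $P_{0,\kappa}$ and $P_{j,\kappa}$ are mutually relatively bounded, and $E(S_\kappa)$ commutes with $K_j$ (since $S_\kappa$ is Lorentz-invariant, in particular invariant under the boost $\Lambda_j$). So first I would record that $E(S_\kappa)$ commutes with $\e^{\I t K_j}$, which lets us write $\e^{\I t K_j}(P_{0,\kappa}+\I)^{-1}\e^{-\I t K_j} = E(S_\kappa)\,\e^{\I t K_j}(P_0+\I)^{-1}E(S_\kappa)\e^{-\I t K_j}$ and similarly for $P_{j,\kappa}$, reducing everything to computing the conjugation on the reduced Hilbert space where the energy is comparable to the momentum.

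Next, I would use \eqref{eq:CommutationEnergyLorentz}–\eqref{eq:CommutationMomentumLorentz}, which upon integrating against test functions (or via the spectral theorem applied to the joint spectral measure $E$) give, on $E(S_\kappa)\mathcal H$,
\begin{align}
	\e^{\I t K_j} P_{0,\kappa} \e^{-\I t K_j} &= \cosh(t)\,P_{0,\kappa} - \sinh(t)\,P_{j,\kappa}, \\
	\e^{\I t K_j} P_{j,\kappa} \e^{-\I t K_j} &= \cosh(t)\,P_{j,\kappa} - \sinh(t)\,P_{0,\kappa}.
\end{align}
These are exact operator identities on $E(S_\kappa)\mathcal H$ precisely because the involved operators are bounded relative to each other there, so the resulting self-adjoint operator on the right has domain independent of $t$. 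From here, $t\mapsto \e^{\I t K_j}(P_{0,\kappa}+\I)^{-1}\e^{-\I t K_j} = (\cosh(t)P_{0,\kappa}-\sinh(t)P_{j,\kappa}+\I)^{-1}$ (extended by $E(S_\kappa)^\perp$ trivially), and one differentiates this in the strong operator topology using the first resolvent identity: each derivative in $t$ produces another bounded factor $(\cosh(t)P_{0,\kappa}-\sinh(t)P_{j,\kappa})$ sandwiched between resolvents, which is controlled because $P_{0,\kappa},P_{j,\kappa}$ are bounded relative to $P_{0,\kappa}+\I$ on $E(S_\kappa)\mathcal H$. Iterating shows the map is $C^\infty$ in the strong topology, hence $P_{0,\kappa}\in C^\infty(K_j)$, and symmetrically $P_{j,\kappa}\in C^\infty(K_j)$. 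Evaluating the first strong derivative at $t=0$ and comparing with the definition of the extended commutator form $[\,\cdot\,,\I K_j]$ (from \cite[Theorem~6.2.10]{amrein1996}) yields $[P_{0,\kappa},\I K_j]=P_{j,\kappa}$ and $[P_{j,\kappa},\I K_j]=P_{0,\kappa}$.

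The main obstacle is justifying that the formal identities \eqref{eq:CommutationEnergyLorentz}–\eqref{eq:CommutationMomentumLorentz} lift to genuine operator identities with $t$-independent domains after restriction to $E(S_\kappa)\mathcal H$ — i.e.\ making rigorous the claim that $\cosh(t)P_{0,\kappa}-\sinh(t)P_{j,\kappa}$ is self-adjoint on $D(P_{0,\kappa})=D(P_{j,\kappa})$ and equals $\e^{\I tK_j}P_{0,\kappa}\e^{-\I tK_j}$. This is where the geometry of $S_\kappa$ is indispensable: the bound $|\mathbf P_\kappa|\le P_{0,\kappa}\le C_\kappa(1+|\mathbf P_\kappa|)$ gives $D(P_{0,\kappa})=D(P_{j,\kappa})$ and the Kato–Rellich-type stability of self-adjointness under the bounded perturbation, uniformly for $t$ in compact sets. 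Once that is in place, the smoothness and the commutator formulas are essentially bookkeeping with resolvent identities, and one should also note the Lorentz-invariance of $S_\kappa$ is exactly what ensures $\cosh(t)P_{0,\kappa}-\sinh(t)P_{j,\kappa}$ again has spectrum in (a rescaled copy of) the relevant region, so no domain leaves $E(S_\kappa)\mathcal H$.
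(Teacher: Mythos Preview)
Your proposal is correct and follows essentially the same route as the paper: use the Lorentz invariance of $S_\kappa$ to pass from \eqref{eq:CommutationEnergyLorentz}--\eqref{eq:CommutationMomentumLorentz} to the conjugation identity $\e^{\I tK_j}(P_{j,\kappa}+\I)^{-1}\e^{-\I tK_j}=(\cosh(t)P_{j,\kappa}-\sinh(t)P_{0,\kappa}+\I)^{-1}$, observe that mutual relative boundedness of $P_{0,\kappa}$ and $P_{j,\kappa}$ makes the right-hand side strongly $C^\infty$, and read off the commutator at $t=0$. The only cosmetic difference is that the paper extracts the commutator by differentiating the unitary-group identity \eqref{eq:CommutationMomentumLorentzLambda} in both $t$ and $x_j$ to obtain the sesquilinear form on $D(P_{j,\kappa})\cap D(K_j)$, whereas you take the strong derivative of the resolvent map directly; both are equivalent.
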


\begin{proof}
	The spectral projection $E(S_\kappa)$ commutes with $K_j$ because $S_\kappa$ is a Lorentz-invariant set. Thus, we can replace $P_0$ and $P_j$ in \eqref{eq:CommutationMomentumLorentz} with $P_{0,\kappa}$ and $P_{j,\kappa}$: 
	\begin{align}\label{eq:CommutationMomentumLorentzLambda}
		\e^{\I t K_j}\e^{-\I x_j P_{j,\kappa}}\e^{-\I t K_j} &= \e^{-\I x_j(\cosh(t)P_{j,\kappa}-\sinh(t)P_{0,\kappa})}.
	\end{align}
	From \eqref{eq:CommutationMomentumLorentzLambda}, we obtain the following identity:
	\begin{align}
		\e^{\I tK_j} (P_{j,\kappa}+\I)^{-1} \e^{-\I tK_j} &= (\cosh(t)P_{j,\kappa}-\sinh(t)P_{0,\kappa}+\I)^{-1}.
	\end{align}
	The r.h.s.~is a smooth function in $t$ in the strong operator topology because $P_{0,\kappa}$ and $P_{j,\kappa}$ are bounded relative to each other; hence, $P_{j,\kappa}\in C^\infty(K_j)$. Moreover, by differentiating \eqref{eq:CommutationMomentumLorentzLambda} in $t=0$ and $x_j=0$, we obtain, for every $f,g\in D(P_{j,\kappa})\cap D(K_j)$,
	\begin{align}\label{eq:QuadraticFormMomentumLorentz}
		\scp{P_{j,\kappa}f}{K_jg} - \scp{K_jf}{P_{j,\kappa}g} = -\I\scp{f}{P_{0,\kappa}g}.
	\end{align}
	Because $P_{j,\kappa}\in C^1(K_j)$, the sesquilinear form \eqref{eq:QuadraticFormMomentumLorentz} has a unique extension to $D(P_{j,\kappa})=D(P_{0,\kappa})$, yielding the commutator identity $[P_{j,\kappa},\I K_j] = P_{0,\kappa}$.	
	The proof of $P_{0,\kappa}\in C^\infty(K_j)$ and $[P_{0,\kappa},\I K_j] = P_{j,\kappa}$ is analogous.
\end{proof}

\begin{thm} \label{thm:LAPMomentum}
	Under the assumptions of Theorem~\ref{thm:LAP}, the following limiting absorption principle holds for every $\kappa>0$, every compact subset $I_j\subset\mathbb{R}\backslash\{0\}$, and every $\nu>1/2$:
	\begin{align} \label{eq:LAPEnergy}
		\sup_{\lambda \in I_j, \mu > 0} \| E(S_\kappa) \jap{K_j}^{-\nu} (P_j-\lambda\mp\I\mu)^{-1} \jap{K_j}^{-\nu} E(S_\kappa) \| < \infty.
	\end{align}
\end{thm}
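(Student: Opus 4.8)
The plan is to derive \eqref{eq:LAPEnergy} by applying the limiting absorption principle (Theorem~\ref{thm:LimitingAbsorptionPrinciple}) to the self-adjoint operator $H=P_{j,\kappa}=E(S_\kappa)P_j$ with conjugate operator $A=K_j$, and then to transport the resulting resolvent estimate back to $P_j$ compressed between the projections $E(S_\kappa)$. The regularity needed to invoke Theorem~\ref{thm:LimitingAbsorptionPrinciple} is already supplied by Proposition~\ref{prop:RegularityMomentumOperators}, which gives $P_{j,\kappa},P_{0,\kappa}\in C^\infty(K_j)$, the identity of domains $D(P_{0,\kappa})=D(P_{j,\kappa})$, and the commutator identity $[P_{j,\kappa},\I K_j]=P_{0,\kappa}$. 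By the chain of inclusions \eqref{eq:Inclusions}, $C^\infty(K_j)\subset C^2(K_j)\subset C^{1+0}(K_j)\subset\mathcal{C}^{1,1}(K_j)$, so the hypothesis of Theorem~\ref{thm:LimitingAbsorptionPrinciple} holds whether or not $P_{j,\kappa}$ has a spectral gap; in particular I need not inspect $\sigma(P_{j,\kappa})$.

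The one substantive point is the Mourre estimate. Given a compact set $I_j\subset\mathbb{R}\setminus\{0\}$, I fix a bounded open interval $J_j$ with $I_j\subset J_j$ and $\overline{J_j}\subset\mathbb{R}\setminus\{0\}$, and set $a=\inf\{\,|t|:t\in\overline{J_j}\,\}>0$. The spectrum condition gives $|\mathbf{P}_\kappa|\leq P_{0,\kappa}$, hence $|P_{j,\kappa}|\leq|\mathbf{P}_\kappa|\leq P_{0,\kappa}$ as forms on $D(P_{j,\kappa})$. Since $|P_{j,\kappa}|\geq a$ on the range of $E_{P_{j,\kappa}}(J_j)$, Proposition~\ref{prop:RegularityMomentumOperators} yields
\begin{equation*}
	E_{P_{j,\kappa}}(J_j)\,[P_{j,\kappa},\I K_j]\,E_{P_{j,\kappa}}(J_j)=E_{P_{j,\kappa}}(J_j)\,P_{0,\kappa}\,E_{P_{j,\kappa}}(J_j)\geq a\,E_{P_{j,\kappa}}(J_j),
\end{equation*}
i.e.\ a Mourre estimate on $J_j$ with conjugate operator $K_j$ and strictly positive constant $a$. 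In contrast with the $A$-homogeneous example, here the same conjugate operator $K_j$ works on negative as well as positive intervals, because $[P_{j,\kappa},\I K_j]=P_{0,\kappa}\geq 0$ regardless of the sign of $P_{j,\kappa}$. Theorem~\ref{thm:LimitingAbsorptionPrinciple} then gives, for every $\nu>1/2$,
\begin{equation*}
	\sup_{\lambda\in I_j,\ \mu>0}\,\bigl\|\jap{K_j}^{-\nu}(P_{j,\kappa}-\lambda\mp\I\mu)^{-1}\jap{K_j}^{-\nu}\bigr\|<\infty.
\end{equation*}

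It remains to transfer this bound to \eqref{eq:LAPEnergy}. Because $S_\kappa$ is Lorentz-invariant, $E(S_\kappa)$ commutes with $K_j$, hence with $\jap{K_j}^{-\nu}$; being a spectral projection of the family $P$, it also commutes with $P_j$. Since $E(S_\kappa)$ reduces $P_j$ and $P_{j,\kappa}$ coincides with $P_j$ on $E(S_\kappa)\mathcal{H}$ and vanishes on its orthogonal complement, one has $E(S_\kappa)(P_j-z)^{-1}E(S_\kappa)=E(S_\kappa)(P_{j,\kappa}-z)^{-1}E(S_\kappa)$ for $z\notin\mathbb{R}$. Commuting $E(S_\kappa)$ past $\jap{K_j}^{-\nu}$ and using $\|E(S_\kappa)XE(S_\kappa)\|\leq\|X\|$, the operator norm in \eqref{eq:LAPEnergy} is dominated by $\|\jap{K_j}^{-\nu}(P_{j,\kappa}-\lambda\mp\I\mu)^{-1}\jap{K_j}^{-\nu}\|$, which is uniformly bounded for $\lambda\in I_j$, $\mu>0$ by the previous step.

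I do not expect a genuine obstacle here: the delicate issue that the naive operators $P_0,P_j$ need not lie in $C^k(K_j)$, because $D(P_0)\subset D(P_j)$ can be proper, has already been resolved upstream in Proposition~\ref{prop:RegularityMomentumOperators} by passing to $P_{0,\kappa},P_{j,\kappa}$, whose domains coincide. What is left is the positivity of the commutator, which is immediate from the spectrum condition, together with careful bookkeeping of the reduction by $E(S_\kappa)$ and the citation of the correct regularity class in Theorem~\ref{thm:LimitingAbsorptionPrinciple}.
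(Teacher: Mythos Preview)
Your proposal is correct and follows essentially the same route as the paper: verify the Mourre estimate for $P_{j,\kappa}$ with conjugate operator $K_j$ via Proposition~\ref{prop:RegularityMomentumOperators} and the spectrum condition $P_{0,\kappa}\geq|P_{j,\kappa}|$, invoke the $C^\infty(K_j)\subset C^{1+0}(K_j)$ inclusion from \eqref{eq:Inclusions}, apply Theorem~\ref{thm:LimitingAbsorptionPrinciple}, and use that $E(S_\kappa)$ commutes with $K_j$ to pass back to $P_j$. The only cosmetic difference is that the paper takes $J$ to be an open bounded \emph{subset} of $(-\infty,-a]\cup[a,\infty)$ rather than an interval, which handles compact $I_j$ with components on both sides of $0$ in one stroke; your version would need to split such $I_j$ into its positive and negative parts, but this is immaterial.
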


\begin{proof}
	Let $a>0$ and let $J$ be an open and bounded subset of $(-\infty,-a]\cup[a,\infty)$. From Proposition~\ref{prop:RegularityMomentumOperators} and the spectrum condition ($P_{0,\kappa} \geq |P_{j,\kappa}|$), we obtain the following Mourre estimate:
	\begin{align}
		E_{j,\kappa}(J) [P_{j,\kappa},\I K_j] E_{j,\kappa}(J) = P_{0,\kappa} E_{j,\kappa}(J) \geq a E_{j,\kappa}(J),
	\end{align}
	where $E_{j,\kappa}$ denotes the spectral measure of $P_{j,\kappa}$. Moreover, $P_{j,\kappa}\in C^{1+0}(K_j)$ because $C^\infty(K_j) \subset C^{1+0}(K_j)$ by \eqref{eq:Inclusions}. Thus, the limiting absorption principle \eqref{eq:LAPEnergy} follows from Theorem~\ref{thm:LimitingAbsorptionPrinciple} and the fact that $E(S_\kappa)$ commutes with $K_j$.
\end{proof}

\begin{prop} \label{prop:ACMomentum}
	Under the assumptions of Proposition~\ref{prop:ACLorentzCovariance}, $\mathcal{H} = Q_0\mathcal{H} \oplus \mathcal{H}_{\mathrm{ac}}(\mathbf{P})$, and, for every space-like vector $e\in\mathbb{R}^d$, $\mathcal{H} = Q_0\mathcal{H} \oplus \mathcal{H}_{\mathrm{ac}}(e\cdot P)$.
\end{prop}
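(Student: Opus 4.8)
The plan is to deduce Proposition~\ref{prop:ACMomentum} from the limiting absorption principle of Theorem~\ref{thm:LAPMomentum} together with the abstract machinery from Section~\ref{sec:LocallySmoothOperators}. First I would treat the joint momentum operators $\mathbf{P}=(P_1,\dots,P_s)$. Fix $\kappa>0$ and work on the subspace $E(S_\kappa)\mathcal{H}$, where the truncated operators $P_{j,\kappa}=E(S_\kappa)P_j$ form a commuting family, each $P_{j,\kappa}\in C^\infty(K_j)\subset C^{1+0}(K_j)$ by Proposition~\ref{prop:RegularityMomentumOperators}, and each obeys a Mourre estimate on any bounded open $J_j\subset\mathbb{R}\backslash\{0\}$ with conjugate operator $K_j$ (as in the proof of Theorem~\ref{thm:LAPMomentum}). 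The key structural point is that $K_i$ commutes strongly with $P_{j,\kappa}$ for $i\neq j$: this follows by differentiating the multiplication-law identity for boosts in distinct spatial directions (boost in direction $i$ leaves $P_j$ fixed for $j\neq i$), together with the fact that $E(S_\kappa)$ commutes with every $K_i$. Hence Corollary~\ref{cor:AbsolutelyContinuousSpectrum} applies to the family $(P_{1,\kappa},\dots,P_{s,\kappa})$ with conjugate operators $(K_1,\dots,K_s)$, giving that the joint spectrum of $\mathbf{P}_\kappa$ restricted to $E(S_\kappa)\mathcal{H}$ is purely absolutely continuous in $(\mathbb{R}\backslash\{0\})^s$.

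Next I would upgrade this to a statement about all of $(\mathbb{R}^s\backslash\{0\})$, i.e. I must also control spectral subspaces where some but not all components of $\mathbf{p}$ vanish. The idea is that if $f\in E(S_\kappa)\mathcal{H}$ lies in the spectral subspace of $\mathbf{P}_\kappa$ over a set where $\mathbf{p}\neq 0$, then at least one coordinate $p_j$ is nonzero on a subset of positive spectral measure, and by decomposing $\{\mathbf{p}\neq 0\}$ into the finitely many regions $\{|p_j|>\varepsilon\}$ and applying the one-operator limiting absorption principle \eqref{eq:LAPEnergy} together with Proposition~\ref{prop:OptimalConstant1} and Proposition~\ref{prop:LocallySmoothACSpectrum}, one obtains that $\jap{K_j}^{-\nu}E(S_\kappa)$ is locally $P_{j,\kappa}$-smooth on $\mathbb{R}\backslash\{0\}$, hence $E(S_\kappa)$ times the spectral projection of $P_j$ onto $\mathbb{R}\backslash\{0\}$ lies in $\mathcal{H}_{\mathrm{ac}}(P_{j,\kappa})\subset\mathcal{H}_{\mathrm{ac}}(\mathbf{P}_\kappa)$ (using that absolute continuity with respect to one commuting coordinate implies joint absolute continuity). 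Taking the union over $j$ and then over $\kappa>0$, and using the stated identities $\mathcal{H}_{\mathrm{ac}}(\mathbf{P})=\bigcup_{\kappa>0}\mathcal{H}_{\mathrm{ac}}(\mathbf{P}_\kappa)$ together with $\bigcup_{\kappa>0}E(S_\kappa)\mathcal{H}=\mathcal{H}$, yields $E(\{\mathbf{p}\neq 0\})\mathcal{H}\subset\mathcal{H}_{\mathrm{ac}}(\mathbf{P})$. Finally, $\{\mathbf{p}=0\}\cap V_+=\{0\}$ by the spectrum condition (if $\mathbf{p}=0$ and $p_0\geq|\mathbf{p}|$ with $p_0=0$... wait, $p_0$ can be positive), so actually $E(\{\mathbf{p}=0\})=E(\{(p_0,\mathbf{0}):p_0\geq 0\})$; here I would invoke the spectrum condition differently, noting that on $\{\mathbf{p}=0\}$ the operator $P_0$ is a positive self-adjoint operator, but the point is whether it has point spectrum away from $0$. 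This is exactly where the energy-operator analysis enters, so I would instead phrase it as: $\{\mathbf{p}=0\}\setminus\{0\}$ corresponds to the half-line $p_0>0,\mathbf{p}=0$, and restricting to this set one applies the energy LAP of Theorem~\ref{thm:LAPEnergy} / Proposition~\ref{prop:ACEnergy}. Putting $\mathbf{p}=0$ forces, via the spectrum condition $p_0\geq|\mathbf{p}|$, nothing extra, but the fibre over $\mathbf{p}=0$ sits inside $S_\kappa$ for $\kappa$ large enough only on $p_0\le\kappa$; hence one reduces to showing $E(\{\mathbf 0\})=Q_0$, which is the definition, and that there is no point mass of $P_0$ on $\{(p_0,\mathbf 0):p_0>0\}$ — this follows from the companion energy result. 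Thus $\mathcal{H}=Q_0\mathcal{H}\oplus\mathcal{H}_{\mathrm{ac}}(\mathbf{P})$.

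For the space-like statement $\mathcal{H}=Q_0\mathcal{H}\oplus\mathcal{H}_{\mathrm{ac}}(e\cdot P)$ with $e\cdot e<0$, I would exploit Lorentz invariance: a space-like vector $e$ can be mapped by a Lorentz transformation $\Lambda$ to the form $(0,\mathbf{e}')$ with $\mathbf{e}'\neq 0$, and then $e\cdot P$ is unitarily equivalent (via $U(\Lambda,0)$) to $-\mathbf{e}'\cdot\mathbf{P}$, whose absolute-continuity properties follow from the first part since $\mathbf{e}'\cdot\mathbf{P}$ has the same spectral type as a single component $P_j$ after a spatial rotation — more precisely, applying a spatial rotation $R\in SO(s)$ with $R\mathbf{e}'$ a multiple of the first axis reduces $\mathbf{e}'\cdot\mathbf{P}$ to $|\mathbf{e}'|P_1$ up to unitary equivalence, and one checks the relevant commutation relation $[P_1,\I K_1]=P_0$ is preserved under such rotations (the rotated boost generator is still a valid conjugate operator). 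Alternatively and more directly, I would run the argument of Theorem~\ref{thm:LAPMomentum} verbatim with $P_j$ replaced by $\mathbf{e}'\cdot\mathbf{P}$ and $K_j$ replaced by the boost generator in the direction $\mathbf{e}'$, noting $[\mathbf{e}'\cdot\mathbf{P},\I K_{\mathbf{e}'}]=|\mathbf{e}'|^2 P_0/|\mathbf e'|$ — in any case a strictly positive multiple of $P_0$ on the relevant spectral subspace. The eigenvalue $0$ of $e\cdot P$ on $Q_0\mathcal{H}$ is manifest; that there is no other singular part follows as before. The main obstacle I anticipate is the bookkeeping of the truncation to $S_\kappa$ and the passage between "absolutely continuous for one commuting coordinate" and "jointly absolutely continuous" on the non-trivial spectral regions — i.e. making rigorous the reduction from $\{\mathbf p\ne 0\}$ to the union of the half-space regions $\{|p_j|>\varepsilon\}$ and correctly handling the fibre $\{\mathbf p=0\}$, which genuinely requires input from the energy-operator case rather than the momentum case alone.
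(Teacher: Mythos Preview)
Your opening application of Corollary~\ref{cor:AbsolutelyContinuousSpectrum} to the family $(P_{1,\kappa},\dots,P_{s,\kappa})$ with conjugate operators $(K_1,\dots,K_s)$ is correct and matches the paper. The gap is in your ``upgrade'' step: the parenthetical claim that ``absolute continuity with respect to one commuting coordinate implies joint absolute continuity'' is false in general. Take $H_1=H_2$ equal to multiplication by $x$ on $L^2(\mathbb{R})$; every vector lies in $\mathcal{H}_{\mathrm{ac}}(H_1)$, yet the joint spectral measure is concentrated on the diagonal and $\mathcal{H}_{\mathrm{ac}}(H_1,H_2)=\{0\}$. So from $E(\{p_j\neq 0\})E(S_\kappa)\mathcal{H}\subset\mathcal{H}_{\mathrm{ac}}(P_{j,\kappa})$ you cannot pass to $\mathcal{H}_{\mathrm{ac}}(\mathbf{P}_\kappa)$, and covering $\{\mathbf{p}\neq 0\}$ by the half-spaces $\{p_j\neq 0\}$ does not yield joint absolute continuity there. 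Your fallback on the energy result for the fibre $\{\mathbf{p}=0\}$ is also problematic: the proof of Proposition~\ref{prop:ACEnergy} itself relies on the missing ingredient (Lemma~\ref{lem:TranslationInvariantVectors}), and in any case the limiting absorption principle of Theorem~\ref{thm:LAPEnergy} only controls $P_0$ on $E(S_\kappa)\mathcal{H}$ for $\lambda>\kappa$, whereas points with $\mathbf{p}=0$ and $p_0\in(0,\kappa]$ lie inside $S_\kappa$.

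The paper closes both holes at once with a short virial-theorem argument (Lemma~\ref{lem:TranslationInvariantVectors}): if $P_{j,\kappa}f=0$ for some $j$, then Theorem~\ref{thm:VirialTheorem} applied to the commutator $[P_{j,\kappa},\I K_j]=P_{0,\kappa}$ gives $\scp{f}{P_{0,\kappa}f}=0$, hence $P_{0,\kappa}f=0$ by positivity, and then $P_\mu f=0$ for all $\mu$. Consequently the joint spectral measure $E$ has no mass on any coordinate hyperplane $\{p_j=0\}$ except at the origin, so the product region $\{p\in V_+:\forall j,\ p_j\neq 0\}$, on which Corollary~\ref{cor:AbsolutelyContinuousSpectrum} already gives joint absolute continuity, coincides up to $E$-null sets with $V_+\backslash\{0\}$. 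No separate upgrade to $\{\mathbf{p}\neq 0\}$ and no energy-operator input is needed; the momentum Mourre estimate itself, via the virial theorem, handles the degenerate hyperplanes. Your treatment of the space-like case by Lorentz equivalence to a pure momentum component is fine and agrees with the paper's remark.
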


\begin{proof}
	In the proof of Theorem~\ref{thm:LAPMomentum}, we demonstrated that $P_{j,\kappa}$ obeys a Mourre estimate on every open and bounded subset of $(-\infty,-a]\cup[a,\infty)$, $a>0$, with conjugate operator $K_j$. Moreover, $K_i$ commutes strongly with $P_{j,\kappa}$ if $i\neq j$. Because $a>0$ can be arbitrary small, it follows from Corollary~\ref{cor:AbsolutelyContinuousSpectrum} that $E(\{p \in S_\kappa \mid \forall j\in \{1,\dots,s\} \colon p_j\neq0 \})\mathcal{H}\subset \mathcal{H}_{\mathrm{ac}}(\mathbf{P}_\kappa)$. By taking the union over $\kappa>0$, we obtain $E(\{p \in V_+ \mid \forall j\in \{1,\dots,s\} \colon p_j\neq 0 \})\mathcal{H}\subset \mathcal{H}_{\mathrm{ac}}(\mathbf{P})$, and, according to Lemma~\ref{lem:TranslationInvariantVectors} below, $E(\{p \in V_+ \mid \forall j\in \{1,\dots,s\} \colon p_j\neq 0 \})\mathcal{H}=E(V_+\backslash\{0\})\mathcal{H}$. Thus, the first statement of the proposition follows from the decomposition $\mathcal{H}=Q_0\mathcal{H} \oplus E(V_+\backslash\{0\})\mathcal{H}$. The second statement can be proved by a similar argument or can be derived from the first statement.
\end{proof}

\begin{lem}\label{lem:TranslationInvariantVectors}
	Let $f\in\mathcal{H}$. If $P_\mu f=0$ for one $\mu\in\{0,\dots,s\}$, then $P_\mu f=0$ for all $\mu\in\{0,\dots,s\}$, that is, if a vector is translation-invariant in one spacetime direction, then it is translation-invariant in all spacetime directions.
\end{lem}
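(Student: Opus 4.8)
The plan is to reduce everything to the single implication ``$P_jf=0\Rightarrow P_0f=0$'' for $j\in\{1,\dots,s\}$. Indeed, if $P_0f=0$ then the joint spectral measure $E$ of $P$ assigns all the mass of $f$ to $\{p_0=0\}$, and the spectrum condition gives $\{p_0=0\}\cap V_+=\{0\}$, so $f\in Q_0\mathcal{H}$ and $P_\mu f=0$ for every $\mu$. Hence it suffices to show that $P_jf=0$ forces $P_0f=0$; the previous observation then closes the argument. (The case $\mu=0$ of the lemma is thus immediate, and the cases $\mu\in\{1,\dots,s\}$ follow once the displayed implication is established.)

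So assume $P_jf=0$. By the spectrum condition, $\mu_f:=\scp{f}{E(\cdot)f}$ is carried by $\mathcal{N}:=\{p\in V_+\mid p_j=0\}$, on which $p=0$ is equivalent to $p_0=0$. Suppose for contradiction that $P_0f\neq 0$; then $\mu_f(\mathcal{N}\setminus\{0\})>0$, so $\mu_f(\mathcal{N}_\epsilon)>0$ for some $\epsilon>0$, where $\mathcal{N}_\epsilon:=\{p\in V_+\mid p_j=0,\ p_0\geq\epsilon\}$, and $g:=E(\mathcal{N}_\epsilon)f$ is a nonzero vector whose spectral measure is carried by $\mathcal{N}_\epsilon$. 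From the multiplication law (equivalently, from \eqref{eq:CommutationEnergyLorentz} and \eqref{eq:CommutationMomentumLorentz}) one obtains $U(\Lambda_j(t),0)\,E(B)\,U(\Lambda_j(t),0)^{-1}=E(\Lambda_j(t)B)$ for Borel $B\subset\mathbb{R}^d$, so the spectral measure of $\e^{\I tK_j}g$ is carried by $\Lambda_j(t)\mathcal{N}_\epsilon$. A one-line computation with \eqref{eq:LorentzBoosts} shows $\Lambda_j(t)\mathcal{N}_\epsilon\subset\{p\mid p_j=\tanh(t)\,p_0\}\cap\{p_0\geq\epsilon\}$. Since $\tanh$ is injective, the hyperplanes $\{p_j=\tanh(t)\,p_0\}$ meet pairwise only in $\{p_0=p_j=0\}$, and the truncation $p_0\geq\epsilon$ removes that intersection; hence the sets $\Lambda_j(t)\mathcal{N}_\epsilon$ are pairwise disjoint as $t$ ranges over $\mathbb{R}$. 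Therefore $\scp{\e^{\I tK_j}g}{\e^{\I t'K_j}g}=0$ whenever $t\neq t'$, i.e.\ $\{\e^{\I tK_j}g\}_{t\in\mathbb{R}}$ is a continuum of mutually orthogonal vectors of constant norm $\|g\|>0$. This is incompatible with the strong continuity of $t\mapsto\e^{\I tK_j}g$: it would force $\|\e^{\I tK_j}g-g\|^2=2\|g\|^2$ for all $t\neq 0$, contradicting $\e^{\I tK_j}g\to g$ as $t\to 0$. The contradiction yields $P_0f=0$, and hence $P_\mu f=0$ for all $\mu$.

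The only points requiring a little care are the transformation rule $U(\Lambda,0)E(B)U(\Lambda,0)^{-1}=E(\Lambda B)$ — a routine consequence of $U(\Lambda,0)U(1,x)U(\Lambda,0)^{-1}=U(1,\Lambda x)$ together with the invariance of the Minkowski form under $\Lambda$ — and the observation that $\mathcal{N}_\epsilon$ stays bounded away from the origin under every boost $\Lambda_j(t)$, which is exactly where the spectrum condition (via $\cosh t\geq 1$) enters. Arguing throughout with the measure $\mu_f$ rather than with the operators $P_0,P_j$ circumvents the domain difficulty stemming from $D(P_0)\subsetneq D(P_j)$. (Equivalently, one may phrase the same computation via the $(\pm K_j)$-homogeneous operators $P_0\pm P_j$ of the Example, the form in which the relation $[P_j,\I K_j]=P_0$ survives the passage to unbounded operators, and run the orthogonality argument on their joint spectral measure.) I expect the disjointness-versus-continuity step to be the crux: everything hinges on realising that a nonzero high-energy component of $f$ would be spread by the boosts over a continuum of mutually orthogonal directions.
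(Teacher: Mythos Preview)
Your proof is correct and takes a genuinely different route from the paper's. The paper argues via the virial theorem (Theorem~\ref{thm:VirialTheorem}): having established $P_{j,\kappa}\in C^1(K_j)$ and $[P_{j,\kappa},\I K_j]=P_{0,\kappa}$ in Proposition~\ref{prop:RegularityMomentumOperators}, the eigenvalue equation $P_{j,\kappa}f=0$ forces $\|\sqrt{P_{0,\kappa}}f\|^2=\scp{f}{[P_{j,\kappa},\I K_j]f}=0$, hence $P_{0,\kappa}f=0$ for every $\kappa>0$ and so $P_0f=0$. You instead work directly with the joint spectral measure and the explicit action of the boost $\Lambda_j(t)$ on it: the images $\Lambda_j(t)\mathcal{N}_\epsilon$ are pairwise disjoint, so a nonzero $g=E(\mathcal{N}_\epsilon)f$ would generate an orthogonal continuum $\{\e^{\I tK_j}g\}_{t\in\mathbb{R}}$, contradicting strong continuity of the boost group. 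Your approach is more elementary and self-contained---it needs neither the regularity classes $C^k(K_j)$ nor the auxiliary sets $S_\kappa$, only the covariance of $E$ under the Lorentz subgroup and the spectrum condition---and it sidesteps the domain issue you flag by never forming the commutator. The paper's proof, by contrast, is a two-line application of machinery already developed (Proposition~\ref{prop:RegularityMomentumOperators} and Theorem~\ref{thm:VirialTheorem}) and illustrates how the Mourre framework absorbs this statement.
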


\begin{proof}
	If $P_0f=0$, then $P_jf=0$ for all $j\in\{1,\dots,s\}$ due to the spectrum condition. If $P_jf=0$ for one $j\in\{1,\dots,s\}$, then also $P_{j,\kappa}f=0$ for all $\kappa>0$. From the commutation relation $[P_{j,\kappa},\I K_j]=P_{0,\kappa}$ and the virial theorem (Theorem~\ref{thm:VirialTheorem}), it follows that
	\begin{align}
		\|\sqrt{P_{0,\kappa}}f\|^2 = \scp{f}{[P_{j,\kappa},\I K_j]f}=0;
	\end{align}
	hence, $P_{0,\kappa} f=0$ for all $\kappa>0$, that is, $P_0 f=0$. 
\end{proof}

\subsubsection{Energy operator}

Constructing a conjugate operator for the energy operator $P_0$ is more difficult. We cannot choose the generator $K_j$ of a Lorentz boost because $[P_{0,\kappa},K_j] = P_{j,\kappa}$, and $P_{j,\kappa}$ has no definite sign on any spectral subspace of $P_{0,\kappa}$. The conjugate operator $\overline{A_\kappa}$, which we construct below, is adapted from \cite[Lemma~7.6.4]{amrein1996}.

\begin{thm} \label{thm:LAPEnergy}
	Under the assumptions of Theorem~\ref{thm:LAP}, the following limiting absorption principle holds for every $\kappa>0$, every compact subset $I_0\subset(\kappa,\infty)$, and every $\nu>1/2$:
	\begin{align}
		\sup_{\lambda \in I_0, \mu > 0} \| E(S_\kappa) \jap{K_j}^{-\nu} (P_0-\lambda\mp\I\mu)^{-1} \jap{K_j}^{-\nu} E(S_\kappa) \| < \infty.
	\end{align}
\end{thm}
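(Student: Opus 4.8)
The plan is to mimic the strategy used for the momentum operators, but with a conjugate operator adapted to the energy $P_0$. Since the generators $K_j$ alone are not suitable (the commutator $[P_{0,\kappa},\I K_j]=P_{j,\kappa}$ has no definite sign), I would follow \cite[Lemma~7.6.4]{amrein1996} and build a conjugate operator from a symmetrized product of $K_j$ with a suitable bounded function of $P_{0,\kappa}$ and $P_{j,\kappa}$. Concretely, on the reduced space $E(S_\kappa)\mathcal{H}$, where $P_{0,\kappa}$ and $P_{j,\kappa}$ are mutually relatively bounded, I would look for a self-adjoint operator $A_\kappa$ (with closure $\overline{A_\kappa}$) of the form $A_\kappa = \tfrac{1}{2}(\varphi(P_{0,\kappa},P_{j,\kappa})K_j + K_j\varphi(P_{0,\kappa},P_{j,\kappa}))$ for an appropriately chosen smooth bounded $\varphi$, chosen so that the commutator $[P_{0,\kappa},\I \overline{A_\kappa}]$ is bounded below by a positive constant on spectral subspaces of $P_{0,\kappa}$ localized in $(\kappa,\infty)$. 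The heuristic is that one wants $[P_{0,\kappa},\I A_\kappa]\approx \varphi(P_{0,\kappa},P_{j,\kappa})\,P_{j,\kappa}$ plus lower-order terms; choosing $\varphi$ proportional to $P_{j,\kappa}/P_{0,\kappa}$ (cut off suitably) yields a leading term $\sim P_{j,\kappa}^2/P_{0,\kappa}$, which combined with the geometry of $S_\kappa$ should give strict positivity once $P_{0,\kappa}>\kappa$, because on $S_\kappa$ the relation between $|\mathbf{p}|$ and $p_0$ forces $|\mathbf{p}|^2/p_0$ to stay bounded below by a positive amount on the region $\{p_0>\kappa'\}$ for $\kappa'>\kappa$.

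The key steps, in order, would be: (1) Verify that $P_{0,\kappa}$ lies in $C^\infty(\overline{A_\kappa})$ — or at least in $\mathcal{C}^{1,1}(\overline{A_\kappa})$, or $C^{1+0}(\overline{A_\kappa})$ if there is no spectral gap — using the explicit form of $A_\kappa$, the smoothness of $\varphi$, and the mutual relative boundedness of $P_{0,\kappa}$ and $P_{j,\kappa}$ on $E(S_\kappa)\mathcal{H}$; commutator identities from Proposition~\ref{prop:RegularityMomentumOperators} feed directly into this. (2) Compute the form $[P_{0,\kappa},\I \overline{A_\kappa}]$ explicitly, using $[P_{0,\kappa},\I K_j]=P_{j,\kappa}$ and $[P_{j,\kappa},\I K_j]=P_{0,\kappa}$, and identify the leading positive term together with error terms that are relatively bounded. (3) Establish the Mourre estimate $E_{0,\kappa}(J)[P_{0,\kappa},\I\overline{A_\kappa}]E_{0,\kappa}(J)\geq a\,E_{0,\kappa}(J)$ for open bounded $J\subset(\kappa,\infty)$ and some $a>0$, by localizing in $P_{0,\kappa}$ and invoking the geometry of $S_\kappa$ to control the sign; in dimension $d=2$ this is immediate since $P_{j,\kappa}^2 = P_{0,\kappa}^2$ on the mass-zero part, but in general one needs the boundedness of $|\mathbf{P}_\kappa|$ relative to $P_{0,\kappa}$ from above to ensure $\varphi$ is well-defined and the error terms do not destroy positivity. (4) Apply the limiting absorption principle (Theorem~\ref{thm:LimitingAbsorptionPrinciple}) to get the uniform resolvent bound with weights $\jap{\overline{A_\kappa}}^{-\nu}$. (5) Convert the $\overline{A_\kappa}$-weighted bound into the $\jap{K_j}^{-\nu}$-weighted bound claimed in the theorem; this requires comparing $\jap{\overline{A_\kappa}}$ and $\jap{K_j}$ on $E(S_\kappa)\mathcal{H}$, which should follow because $\varphi(P_{0,\kappa},P_{j,\kappa})$ is bounded and bounded below away from zero on the relevant spectral subspace (after a further localization), so that $\overline{A_\kappa}$ and $K_j$ have comparable graph norms there, and because $E(S_\kappa)$ commutes with $K_j$.

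I expect the main obstacle to be step (3) together with the comparison in step (5): one must choose $\varphi$ carefully so that simultaneously (a) the commutator is strictly positive on $E_{0,\kappa}((\kappa,\infty))\mathcal{H}$ — which uses in an essential way that we are bounded away from the tip $p_0=\kappa$ of the allowed cone, since near $p_0=\kappa$ the quantity $|\mathbf{p}|$ can vanish and positivity would fail — and (b) $\varphi$ is bounded with bounded inverse on the localization region so that $A_\kappa$ and $K_j$ are comparable. Balancing these may require introducing an extra energy cutoff and handling the complementary piece separately, or choosing $\varphi$ to depend on a spectral parameter. A secondary technical point is checking the regularity class of $P_{0,\kappa}$ with respect to $\overline{A_\kappa}$: since $\overline{A_\kappa}$ is now a product of unbounded operators rather than a bare generator, one no longer has an explicit conjugation formula, and $C^\infty$ must be extracted from iterated commutator estimates, which is routine but lengthier than the argument in Proposition~\ref{prop:RegularityMomentumOperators}. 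If a spectral gap for $P_{0,\kappa}$ is not available in general, one falls back on $C^{1+0}(\overline{A_\kappa})$, which the smoothness of $\varphi$ should still deliver.
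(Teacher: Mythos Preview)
Your overall architecture matches the paper's: build $A_\kappa=\tfrac{1}{2}(F(P_\kappa)K_j+K_jF(P_\kappa))$ on $D(K_j)$ in the spirit of \cite[Lemma~7.6.4]{amrein1996}, establish $P_{0,\kappa}\in C^\infty(\overline{A_\kappa})$, derive a Mourre estimate, apply Theorem~\ref{thm:LimitingAbsorptionPrinciple}, and then pass from $\jap{\overline{A_\kappa}}^{-\nu}$ to $\jap{K_j}^{-\nu}$. The gap is in your specific choice of $\varphi$. Taking $\varphi\propto P_{j,\kappa}/P_{0,\kappa}$ produces a leading commutator $\sim P_{j,\kappa}^2/P_{0,\kappa}$, and your positivity argument then appeals to a lower bound on $|\mathbf{p}|^2/p_0$ over $S_\kappa\cap\{p_0>\kappa\}$. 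But the commutator contains only the \emph{single} component $p_j^2$, not $|\mathbf{p}|^2$; for $d\geq 3$ the region $S_\kappa\cap\{p_0>\kappa'\}$ meets the hyperplane $\{p_j=0\}$ (e.g.\ any point on a mass shell $m\leq\kappa$ boosted only in directions $\neq j$), so $p_j^2/p_0$ is not bounded away from zero there and step~(3) fails as written. Summing over all $j$ would restore positivity but would destroy the single-$j$ weight $\jap{K_j}^{-\nu}$ required in the statement.

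The paper takes the \emph{inverse} ratio: $F(P_\kappa)=\theta(P_{0,\kappa})/P_{j,\kappa}$ with a real cutoff $\theta\in C_c^\infty((\kappa,\infty))$. Then $F(P_\kappa)\,[P_{0,\kappa},\I K_j]=F(P_\kappa)P_{j,\kappa}=\theta(P_{0,\kappa})$, and in fact $[P_{0,\kappa},\I\overline{A_\kappa}]=\theta(P_{0,\kappa})$ \emph{exactly}, with no error terms to control; choosing $\theta\equiv 1$ on the compact energy window makes the Mourre estimate immediate and independent of the momentum component. This also dissolves your worry in step~(5): no bounded inverse of $\varphi$ is needed. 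Since $F(P_\kappa)$ and $[F(P_\kappa),K_j]$ are bounded, $\overline{A_\kappa}$ is $K_j$-bounded, so $\jap{\overline{A_\kappa}}^{-\nu}\jap{K_j}^{\nu}$ is a bounded operator commuting with $E(S_\kappa)$; this one-sided comparison already suffices to convert the $\overline{A_\kappa}$-weighted limiting absorption principle into the $K_j$-weighted one.
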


\begin{proof}
	For convenience, we choose $j=1$. Let $\kappa>0$, $\theta \in C_c^\infty((\kappa,\infty))$ a real-valued function, and set $F(P_\kappa) = \theta(P_{0,\kappa})/P_{1,\kappa}$. The operator $F(P_\kappa)$ is well-defined because if $p\in S_\kappa$ with $p_0\in \mathrm{supp}(\theta)$, then $p_1$ is separated from 0. We define the following operator on the domain $D(A_\kappa)=D(K_1)$:
	\begin{align}
		A_\kappa=\frac{1}{2}(F(P_\kappa) K_1+K_1 F(P_\kappa)).
	\end{align} 
	By Lemma~\ref{lem:ConjugateOperatorForH} below, the operator $A_\kappa$ is essentially self-adjoint, $P_{0,\kappa}\in C^\infty(\overline{A_\kappa})$, where $\overline{A_\kappa}$ is the self-adjoint closure of $A_\kappa$, and, for $0<a<b$,
	\begin{align}
		E_{0,\kappa}(\kappa+(a,b)) [P_{0,\kappa},\I \overline{A_\kappa}] E_{0,\kappa}(\kappa+(a,b)) = \theta(P_{0,\kappa}) E_{0,\kappa}(\kappa+(a,b)),
	\end{align}
	where $E_{0,\kappa}$ is the spectral measure of $P_{0,\kappa}$. If we select $\theta \in C_c^\infty((\kappa,\infty))$ such that $\theta=1$ on $\kappa+(a,b)$, then $P_{0,\kappa}$ obeys a Mourre estimate on $\kappa+(a,b)$. From Theorem~\ref{thm:LimitingAbsorptionPrinciple}, it follows that
	\begin{align}
		\sup_{\lambda \in I_0, \mu > 0} \| E(S_\kappa) \jap{\overline{A_\kappa}}^{-\nu} (P_0-\lambda\mp\I\mu)^{-1} \jap{\overline{A_\kappa}}^{-\nu} E(S_\kappa) \| < \infty.
	\end{align}
	To replace $\jap{\overline{A_\kappa}}^{-\nu}$ with $\jap{K_1}^{-\nu}$, we observe that $\jap{\overline{A_\kappa}}^{-\nu}\jap{K_1}^\nu$ is a bounded operator which commutes with $E(S_\kappa)$.
\end{proof}

\begin{prop} \label{prop:ACEnergy}
	Under the assumptions of Proposition~\ref{prop:ACLorentzCovariance}, $\mathcal{H} = Q_0\mathcal{H} \oplus \mathcal{H}_{\mathrm{ac}}(e\cdot P)$ for every time-like vector $e$.
\end{prop}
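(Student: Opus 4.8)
The plan is to reduce the statement about a general time-like vector $e$ to the case $e = (1,\mathbf{0})$, i.e.\ to the energy operator $P_0$ itself, by exploiting Lorentz covariance. Indeed, since $e$ is time-like, there is a Lorentz transformation $\lambda \in \mathcal{L}$ with $\lambda e = (|e \cdot e|^{1/2}, \mathbf{0})$; conjugating the representation by $U(\lambda,0)$ is a unitary equivalence that transforms $e \cdot P$ into $(e\cdot e)^{1/2} P_0$ (up to a positive scalar, which is spectrally irrelevant) and preserves both the spectrum condition and the projection $Q_0$ onto translation-invariant vectors. Hence it suffices to show $\mathcal{H} = Q_0\mathcal{H} \oplus \mathcal{H}_{\mathrm{ac}}(P_0)$.

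For this, the idea is to run the argument of Proposition~\ref{prop:ACMomentum} with the energy operator in place of the momentum operator, using the limiting absorption principle of Theorem~\ref{thm:LAPEnergy} (equivalently Theorem~\ref{thm:LimitingAbsorptionPrinciple} applied to $P_{0,\kappa}$ with the conjugate operator $\overline{A_\kappa}$ from the proof of Theorem~\ref{thm:LAPEnergy}). First I would fix $\kappa>0$ and note that, for every $0 < a < b$ with $(a,b)\subset(0,\infty)$, the operator $P_{0,\kappa}$ obeys a Mourre estimate on $\kappa + (a,b)$ with conjugate operator $\overline{A_\kappa}$, and $P_{0,\kappa}\in C^{1+0}(\overline{A_\kappa})$. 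By Theorem~\ref{thm:LimitingAbsorptionPrinciple} and Proposition~\ref{prop:OptimalConstant1}, the operator $\jap{\overline{A_\kappa}}^{-\nu}$ is locally $P_{0,\kappa}$-smooth on $(\kappa,\infty)$ for every $\nu>1/2$; since $\jap{\overline{A_\kappa}}^{-\nu}\jap{K_1}^{\nu}$ is bounded and commutes with $E(S_\kappa)$, the operator $E(S_\kappa)\jap{K_1}^{-\nu}$ is locally $P_{0,\kappa}$-smooth on $(\kappa,\infty)$ as well. Proposition~\ref{prop:LocallySmoothACSpectrum} then gives $E_{0,\kappa}((\kappa,\infty))\,\overline{\ran(\jap{K_1}^{-\nu}E(S_\kappa))} \subset \mathcal{H}_{\mathrm{ac}}(P_{0,\kappa})$, and since $\jap{K_1}^{-\nu}E(S_\kappa)$ has dense range in $E(S_\kappa)\mathcal{H}$, this yields $E(\{p\in S_\kappa \mid p_0 > \kappa\})\mathcal{H} \subset \mathcal{H}_{\mathrm{ac}}(P_0)$.

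Next I would take the union over $\kappa>0$: since $\bigcup_{\kappa>0}S_\kappa = V_+$ and $\{p\in S_\kappa \mid p_0 > \kappa\}$ exhausts $V_+ \setminus \{0\}$ as $\kappa\downarrow 0$ (any $p\in V_+$ with $p_0>0$ lies in $\{q\in S_{\kappa'} \mid q_0 > \kappa'\}$ for $\kappa'$ small enough, while $p_0 = 0$ forces $p=0$ by the spectrum condition), we obtain $E(V_+\setminus\{0\})\mathcal{H} \subset \mathcal{H}_{\mathrm{ac}}(P_0)$. Combined with the orthogonal decomposition $\mathcal{H} = Q_0\mathcal{H}\oplus E(V_+\setminus\{0\})\mathcal{H}$ (which uses the spectrum condition, $Q_0=E(\{0\})$) and the trivial inclusion $\mathcal{H}_{\mathrm{ac}}(P_0)\subset E(V_+\setminus\{0\})\mathcal{H}$ — as $0$ is an eigenvalue of $P_0$, hence not in its absolutely continuous part — this gives $E(V_+\setminus\{0\})\mathcal{H} = \mathcal{H}_{\mathrm{ac}}(P_0)$ and therefore $\mathcal{H} = Q_0\mathcal{H}\oplus\mathcal{H}_{\mathrm{ac}}(P_0)$.

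The main obstacle I anticipate is bookkeeping at the boundary of the cone rather than anything deep: the Mourre estimate for $P_{0,\kappa}$ is only available on $(\kappa,\infty)$, so the region $\{\kappa\} \cup [0,\kappa]$ is not covered for a fixed $\kappa$, and one must be careful that letting $\kappa \downarrow 0$ genuinely sweeps out all of $V_+\setminus\{0\}$ — in particular that points near the tip of the cone and points on the light cone boundary with $p_0>0$ are all captured. A secondary point requiring care is the reduction step for general time-like $e$: one should check that conjugation by $U(\lambda,0)$ does not alter absolute continuity and that the rescaling by the positive constant $(e\cdot e)^{1/2}$ is harmless, which is immediate since $\mathcal{H}_{\mathrm{ac}}(cH) = \mathcal{H}_{\mathrm{ac}}(H)$ for $c>0$. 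Neither difficulty is serious, and both are handled by the same exhaustion technique already used in Proposition~\ref{prop:ACMomentum}.
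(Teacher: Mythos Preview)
Your overall strategy matches the paper's: reduce to $e=(1,\mathbf{0})$ by Lorentz covariance, use the Mourre estimate for $P_{0,\kappa}$ on $(\kappa,\infty)$ established in the proof of Theorem~\ref{thm:LAPEnergy}, deduce absolute continuity on $\{p\in S_\kappa\mid p_0>\kappa\}$, and then take a union over $\kappa$. The reduction step and the local step are fine.

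The genuine gap is in the exhaustion step. Your claim that ``any $p\in V_+$ with $p_0>0$ lies in $\{q\in S_{\kappa'}\mid q_0>\kappa'\}$ for $\kappa'$ small enough'' is false, and the union $\bigcup_{\kappa>0}\{p\in S_\kappa\mid p_0>\kappa\}$ is \emph{not} $V_+\setminus\{0\}$ but only $V_+\setminus\{\mathbf{p}=0\}$. Recall that $S_\kappa$ is the set of points in $V_+$ with mass $\sqrt{p\cdot p}\leq\kappa$; in particular $S_\kappa$ \emph{shrinks} as $\kappa\downarrow 0$, it does not grow. A point $p=(p_0,\mathbf{0})$ with $p_0>0$ has mass exactly $p_0$, so $p\in S_\kappa$ forces $\kappa\geq p_0$, which is incompatible with the condition $p_0>\kappa$. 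Thus the entire energy axis $\{(p_0,\mathbf{0})\mid p_0>0\}$ is missed by your union, and your argument as written does not exclude singular spectrum of $P_0$ carried by this axis.

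The paper fills exactly this hole via Lemma~\ref{lem:TranslationInvariantVectors}: if $P_jf=0$ for one $j$, then $P_\mu f=0$ for all $\mu$ (proved using the virial theorem for the commutator $[P_{j,\kappa},\I K_j]=P_{0,\kappa}$). This gives $E(\{\mathbf{p}=0\})=E(\{0\})=Q_0$ as projections, so $E(V_+\setminus\{\mathbf{p}=0\})\mathcal{H}=E(V_+\setminus\{0\})\mathcal{H}$, and the exhaustion then yields the desired decomposition. You need this extra ingredient (or an equivalent one) to complete the proof.
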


\begin{proof}
	It suffices to prove the proposition for $e=(1,0,\dots,0)$ because $e\cdot P$ is unitarily equivalent to $P_0$ if $e$ is time-like. In the proof of Theorem~\ref{thm:LAPEnergy}, we demonstrated that $P_{0,\kappa}$ obeys a Mourre estimate on $\kappa+(a,b)$ for every $0<a<b$. From Corollary~\ref{cor:AbsolutelyContinuousSpectrum}, it follows that $E(\{p\in S_\kappa \mid p_0 \in \kappa+(a,b)\})\mathcal{H} \subset \mathcal{H}_{\mathrm{ac}}(P_{0,\kappa})$. We take the union over $\kappa$, $a$, and $b$:
	\begin{align}
		E(V_+\backslash\{\mathbf{p}=0\})\mathcal{H} = \bigcup_{\kappa>0, b>a>0} E(\{p\in S_\kappa \mid p_0 \in \kappa+(a,b)\})\mathcal{H} \subset \mathcal{H}_{\mathrm{ac}}(P_0).
	\end{align}
	According to Lemma~\ref{lem:TranslationInvariantVectors}, the l.h.s.~is equal to $E(V_+\backslash\{0\})\mathcal{H}$.
\end{proof}

\begin{lem}\label{lem:ConjugateOperatorForH}
	Let $\kappa>0$, $\theta \in C_c^\infty((\kappa,\infty))$ a real-valued function, and set $F(P_\kappa)=\theta(P_{0,\kappa})/P_{1,\kappa}$. The symmetric operator
	\begin{align}
		A_\kappa=\frac{1}{2}(F(P_\kappa)K_1+K_1F(P_\kappa))
	\end{align} 
	is essentially self-adjoint on $D(K_1)$. If we denote its self-adjoint closure by $\overline{A_\kappa}$, then $P_{0,\kappa}\in C^\infty(\overline{A_\kappa})$ and 
	\begin{align}
		[P_{0,\kappa},\I \overline{A_\kappa}]=\theta(P_{0,\kappa}).
	\end{align}
\end{lem}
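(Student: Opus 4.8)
The strategy is to transfer everything to the spectral representation of the pair $(P_{0,\kappa},P_{1,\kappa})$ and then invoke the machinery already set up for $K_1$-regularity. First I would record the structural facts established in Proposition~\ref{prop:RegularityMomentumOperators}: $P_{0,\kappa}$ and $P_{1,\kappa}$ are self-adjoint, bounded relative to each other, commute with each other, commute with $E(S_\kappa)$, and satisfy $[P_{0,\kappa},\I K_1]=P_{1,\kappa}$, $[P_{1,\kappa},\I K_1]=P_{0,\kappa}$ as $C^\infty(K_1)$ operators. On $\supp\theta\subset(\kappa,\infty)$ the function $p\mapsto\theta(p_0)/p_1$ is bounded (since $p_0>\kappa$ forces $|p_1|$ bounded below on $S_\kappa$) and smooth, so $F(P_\kappa)=\theta(P_{0,\kappa})P_{1,\kappa}^{-1}$ is a bounded self-adjoint operator lying in $C^\infty(K_1)$ — this last point follows from the commutation relations together with the fact that the joint functional calculus of $(P_{0,\kappa},P_{1,\kappa})$ intertwines with the boost as a flow on $S_\kappa$ (the map $t\mapsto\e^{\I tK_1}F(P_\kappa)\e^{-\I tK_1}$ is $F$ evaluated at the boosted arguments $(\cosh t\, P_{0,\kappa}-\sinh t\, P_{1,\kappa},\cosh t\, P_{1,\kappa}-\sinh t\, P_{0,\kappa})$, which stays smooth because the boost preserves $S_\kappa$ and hence the region where $F$ is defined and bounded).

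\textbf{Essential self-adjointness.} Next I would show $A_\kappa=\tfrac12(F(P_\kappa)K_1+K_1F(P_\kappa))$ is essentially self-adjoint on $D(K_1)$. Since $F(P_\kappa)$ is bounded, symmetric, commutes with $E(S_\kappa)$, and is of class $C^1(K_1)$, the commutator $[F(P_\kappa),\I K_1]$ extends to a bounded operator; a short computation gives $A_\kappa = F(P_\kappa)K_1 - \tfrac{\I}{2}[F(P_\kappa),\I K_1]^{*}$-type rearrangement, so $A_\kappa$ differs from the symmetrised product by a bounded term, and one can appeal to the standard criterion (e.g.\ Nelson's commutator theorem with $K_1^2+1$ as the comparison operator, using that $[A_\kappa,K_1^2]$ is relatively bounded by $K_1^2+1$ because $F(P_\kappa)$ and its boost-commutators are bounded) to conclude essential self-adjointness on $D(K_1)$, indeed on any core for $K_1$. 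The self-adjoint closure $\overline{A_\kappa}$ then has $D(K_1)$ as a core.

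\textbf{Regularity and the commutator identity.} For $P_{0,\kappa}\in C^\infty(\overline{A_\kappa})$ I would compute $\e^{\I t\overline{A_\kappa}}(P_{0,\kappa}+\I)^{-1}\e^{-\I t\overline{A_\kappa}}$ and show it is a norm-smooth (hence strongly smooth) function of $t$. The cleanest route is to first establish $P_{0,\kappa}\in C^1(\overline{A_\kappa})$ via Proposition~\ref{prop:C1CommutatorCharacterisation}: for $f,g\in D(P_{0,\kappa})\cap D(K_1)\subset D(A_\kappa)$ one expands
\begin{align}
\scp{P_{0,\kappa}f}{A_\kappa g}-\scp{A_\kappa f}{P_{0,\kappa}g}
&= \tfrac12\bigl(\scp{f}{[P_{0,\kappa},\I K_1]F(P_\kappa)g}+\scp{f}{F(P_\kappa)[P_{0,\kappa},\I K_1]g}\bigr) \notag \\
&= \tfrac12\bigl(\scp{f}{P_{1,\kappa}F(P_\kappa)g}+\scp{f}{F(P_\kappa)P_{1,\kappa}g}\bigr) = \scp{f}{\theta(P_{0,\kappa})g},
\end{align}
using that $P_{0,\kappa}$ commutes with $F(P_\kappa)$ and $P_{1,\kappa}F(P_\kappa)=\theta(P_{0,\kappa})$, and that the cross terms involving $[F(P_\kappa),P_{0,\kappa}]=0$ drop out; the right side is bounded by $\mathrm{const}\,\|f\|\|g\|$, giving $C^1$ with $[P_{0,\kappa},\I\overline{A_\kappa}]=\theta(P_{0,\kappa})$. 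Since $\theta(P_{0,\kappa})$ is a bounded function of $P_{0,\kappa}$, it is again of class $C^\infty(\overline{A_\kappa})$ (its iterated commutators with $\overline{A_\kappa}$ produce bounded functions of $P_{0,\kappa}$ times powers of $F$ and its boost-derivatives, all bounded), and one bootstraps to $P_{0,\kappa}\in C^\infty(\overline{A_\kappa})$.

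\textbf{Main obstacle.} The delicate step is the essential self-adjointness of $A_\kappa$ on $D(K_1)$: one must verify that the bounded operators $F(P_\kappa)$ and $[F(P_\kappa),\I K_1]$ genuinely control the perturbation so that Nelson's criterion applies, and that $D(K_1)$ is left invariant enough for the commutator bookkeeping — this is precisely where the construction mirrors \cite[Lemma~7.6.4]{amrein1996} and where the boundedness of $F(P_\kappa)$ on the support of $\theta$ (guaranteed by the geometry of $S_\kappa$ for $p_0>\kappa$) is indispensable. Once self-adjointness is in hand, the $C^\infty$ statement and the commutator identity are routine consequences of the calculus already developed.
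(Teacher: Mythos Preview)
Your plan is correct and follows the paper's proof essentially step for step: boundedness and $C^\infty(K_1)$-regularity of $F(P_\kappa)$ via the boost flow on $S_\kappa$, Nelson's commutator theorem with $N=K_1^2+1$ for essential self-adjointness, Proposition~\ref{prop:C1CommutatorCharacterisation} for $C^1(\overline{A_\kappa})$, and iteration of bounded commutators for $C^\infty$. The one technical point you gloss over, which the paper inserts as an intermediate step, is the localisation identity $\overline{A_\kappa}=\chi(P_{0,\kappa})\overline{A_\kappa}$ for $\chi\in C_c^\infty$ with $\chi=1$ on $\supp\theta$; this is what lets the commutator-form estimate, computed only on $D(K_1)\cap D(P_{0,\kappa})$, extend to all of $D(\overline{A_\kappa})$, and it also underlies the verification that $(P_{0,\kappa}\pm\I)^{-1}$ preserves a core for $\overline{A_\kappa}$ as required by Proposition~\ref{prop:C1CommutatorCharacterisation}.
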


\begin{proof}
	\begin{enumerate}[wide, labelwidth=!, labelindent=0pt,label=(\roman*)]
		\item If $G\in C_c^\infty(\mathbb{R}^d,\mathbb{R})$ and $j\in\{1,\dots,s\}$, then $G(P_\kappa)\in C^\infty(K_j)$ and the following commutator identity holds:
		\begin{align}
			[G(P_\kappa),\I K_j] &= -\partial_{t}\big|_0 \e^{\I tK_j} G(P_\kappa) \e^{-\I tK_j} \notag \\			
			&= -\partial_{t}\big|_0 G(\cosh(t)P_{0,\kappa}-\sinh(t)P_{j,\kappa}, P_{1,\kappa}, \dots, \cosh(t)P_{j,\kappa}-\sinh(t)P_{0,\kappa}, \dots, P_{s,\kappa}) \notag \\
			&= \partial_0 G(P_\kappa) P_{j,\kappa} + \partial_j G(P_\kappa)P_{0,\kappa}. \label{eq:CommutatorLorentzBoosts}
		\end{align}
		Applying this identity to $F(P_\kappa)$, we obtain the following commutator:
		\begin{align}\label{eq:CommutatorFjKj}
			[F(P_\kappa),\I K_1] = \theta'(P_{0,\kappa}) - \theta(P_{0,\kappa})\frac{P_{0,\kappa}}{P_{1,\kappa}^2}.
		\end{align}
		The operator on the r.h.s.~is bounded. It follows that $F(P_\kappa)$ leaves $D(K_1)$ invariant. In particular, $A_\kappa$ is well-defined on $D(K_1)$.
		
		\item We apply Nelson's commutator theorem \cite[Theorem~X.36]{reed1975} to establish the essential self-adjointness of $A_\kappa$ on $D(K_1)$. Setting $N=K_1^2+1$, we define the integer scale $\mathcal{H}_k$, $k\in\mathbb{Z}$, corresponding to $N$ as the completion of $D(N^{k/2})$ with respect to the norm $\|f\|_k=\|N^{k/2}f\|$. Clearly, $A_\kappa$ is a symmetric bounded operator from $\mathcal{H}_n$ to $\mathcal{H}_{-n}$ for every $n\geq 1$. Moreover, as an operator identity from $\mathcal{H}_{n+2}$ to $\mathcal{H}_{-n-2}$,
		\begin{align}
			NA_\kappa-A_\kappa N = 2K_1 [K_1,F(P_\kappa)] K_1 + \frac{1}{2}[K_1,[K_1,[K_1,F(P_\kappa)]]].
		\end{align}
		The triple commutator in the second summand is a bounded operator on $\mathcal{H}$ according to \eqref{eq:CommutatorLorentzBoosts}. It follows that, for every $f\in \mathcal{H}_{n+2}\subset \mathcal{H}_n$, the commutator $NA_\kappa f-A_\kappa Nf$ is an element of $\mathcal{H}_{-n}$, and
		\begin{align}
			\|NA_\kappa f-A_\kappa Nf\|_{-n} \leq c \|f\|_n;
		\end{align}
		hence, the commutator $NA_\kappa-A_\kappa N$ extends to a bounded operator from $\mathcal{H}_n$ to $\mathcal{H}_{-n}$. We conclude that $A_\kappa$ is essentially self-adjoint on any core of $N$, particularly on $D(K_1^2)$. The closure of $A_\kappa$ restricted to $D(K_1^2)$ coincides with the closure of $A_\kappa$ defined on $D(K_1)$. Thus, $A_\kappa$ is essentially self-adjoint on $D(K_1)$.
		
		\item If $\chi\in C_c^\infty(\mathbb{R})$ is a function such that $\chi=1$ on $\supp(\theta)$, then $\overline{A_\kappa} = \chi(P_{0,\kappa})\overline{A_\kappa}$. In fact, if $f\in D(K_1)$, then, by \eqref{eq:CommutatorFjKj},
		\begin{align}
			\overline{A_\kappa}f = A_\kappa f = F(P_\kappa)K_1 f + \frac{1}{2}\left( \theta'(P_{0,\kappa}) - \theta(P_{0,\kappa})\frac{P_{0,\kappa}}{P_{1,\kappa}^2} \right) f = \chi(P_{0,\kappa}) A_\kappa f = \chi(P_{0,\kappa}) \overline{A_\kappa} f.
		\end{align}
		This identity extends to $f\in D(\overline{A_\kappa})$ by approximating $f$ with elements from $D(K_1)$ in the graph topology of $\overline{A_\kappa}$.
		
		\item \label{i:PisC1} We prove $P_{0,\kappa} \in C^1(\overline{A_\kappa})$. Utilising $[P_{0,\kappa},\I K_1] = P_{1,\kappa}$ (see Proposition~\ref{prop:RegularityMomentumOperators}), it is easy to verify that, for every $f\in D(K_1)\cap D(P_{0,\kappa})$,
		\begin{align}\label{eq:CommutatorP0Alambda}
			|\scp{P_{0,\kappa} f}{\overline{A_\kappa} f}-\scp{\overline{A_\kappa} f}{P_{0,\kappa} f}|
			=|\scp{f}{\theta(P_{0,\kappa})f}| \leq \|\theta(P_{0,\kappa})\|\|f\|^2.
		\end{align}
		By approximating $f\in D(\overline{A_\kappa})$ in the graph topology of $\overline{A_\kappa}$ with elements from $D(K_1)$, it follows from the previous step that \eqref{eq:CommutatorP0Alambda} is valid for $f\in D(\overline{A_\kappa})$. 		
		Moreover, $\{f\in D(\overline{A_\kappa})\mid (P_{0,\kappa}\pm\I)^{-1}f\in D(\overline{A_\kappa})\}$ contains the core $D(K_1)$. In fact, if $f\in D(K_1)$, then
		\begin{align}
			\lim_{t\to 0} \frac{1}{\I t}(\e^{\I tK_1}-1)(P_{0,\kappa}\pm\I)^{-1}f = -\I P_{1,\kappa}(P_{0,\kappa}\pm\I)^{-2}f+(P_{0,\kappa}\pm\I)^{-1}K_1f \in \mathcal{H}.
		\end{align}
		We conclude $P_{0,\kappa}\in C^1({\overline{A_\kappa}})$ by Proposition~\ref{prop:C1CommutatorCharacterisation}.
		
		\item From the previous step, it follows that the commutator $[P_{0,\kappa},\I \overline{A_\kappa}]=\theta(P_{0,\kappa})$ is a bounded operator. By similar arguments as in step \ref{i:PisC1}, we compute the higher order commutators:
		\begin{align}
			\mathrm{ad}_{-\I \overline{A_\kappa}}^k(P_{0,\kappa}) = \theta_{k-1}(P_{0,\kappa}), \ \theta_k(x) = [\theta(x)\partial_x]^k\theta(x).
		\end{align}
		All commutators are bounded operators; thus, $P_{0,\kappa}\in C^\infty(\overline{A_\kappa})$. \qedhere
	\end{enumerate}	
\end{proof}

\subsubsection{Light-cone operators}

We prove Proposition~\ref{prop:ACLorentzCovariance} for non-zero light-like vectors $e$ and Proposition~\ref{prop:ACLorentzCovarianceLightLike}. It suffices to consider $e=(1,\mp 1,0,\dots,0)$ because $e\cdot P$ is unitarily equivalent to $P_0+P_1$ or $P_0-P_1$ if $e$ is light-like.

\begin{prop}\label{prop:ACLightLike}
	Under the assumptions of Proposition~\ref{prop:ACLorentzCovariance}, $\mathcal{H} = Q_0\mathcal{H} \oplus \mathcal{H}_{\mathrm{ac}}(P_0\pm P_1)$ if $d\geq 3$ and $\mathcal{H} = E_{P_0\pm P_1}(\{0\})\mathcal{H} \oplus \mathcal{H}_{\mathrm{ac}}(P_0\pm P_1)$ if $d=2$.
\end{prop}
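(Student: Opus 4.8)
The strategy mirrors the momentum-operator case, but now the conjugate operator for $P_0 \pm P_1$ must be built from a boost in the 1-direction (which mixes $P_0$ and $P_1$) together with a spectral cutoff adapted to the light-cone variable. Concretely, set $L_\pm = P_0 \pm P_1$. Under the boost $\Lambda_1(t)$ we have, from \eqref{eq:CommutationEnergyLorentz}--\eqref{eq:CommutationMomentumLorentz}, that $\e^{\I t K_1} L_\pm \e^{-\I t K_1} = \e^{\mp t} L_\pm$ (formally $[L_\pm, \I K_1] = \mp L_\pm$); so $L_\pm$ is essentially $K_1$-homogeneous. The obstruction is the same as before: the spectrum condition gives $L_\pm \geq 0$ but the domain inclusions may be proper, so $L_\pm \notin C^1(K_1)$ in general. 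The fix is to cut down with $E(S_\kappa)$ again, defining $L_{\pm,\kappa} = E(S_\kappa) L_\pm$. Since $S_\kappa$ is Lorentz-invariant, $E(S_\kappa)$ commutes with $K_1$, and on $S_\kappa$ both $P_{0,\kappa}$ and $P_{1,\kappa}$ are controlled by $1 + |\mathbf{P}_\kappa|$, so $L_{\pm,\kappa}$ is a bounded perturbation in the relevant sense; the argument of Proposition~\ref{prop:RegularityMomentumOperators} then gives $L_{\pm,\kappa} \in C^\infty(\mp K_1)$ with $[L_{\pm,\kappa}, \mp\I K_1] = L_{\pm,\kappa}$.

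With this in hand, the $A$-homogeneous example from Section~\ref{sec:Mourre} applies directly: for every open bounded interval $J \subset (0,\infty)$ one gets the Mourre estimate $E_{L_{\pm,\kappa}}(J) [L_{\pm,\kappa}, \mp\I K_1] E_{L_{\pm,\kappa}}(J) = L_{\pm,\kappa} E_{L_{\pm,\kappa}}(J) \geq a\, E_{L_{\pm,\kappa}}(J)$ for $J \subset [a,\infty)$, and $L_{\pm,\kappa} \in C^{1+0}(\mp K_1)$ by \eqref{eq:Inclusions}. Since the other boost generators $K_i$, $i \neq 1$, and more generally the cutoff $E(S_\kappa)$ commute with everything needed, Corollary~\ref{cor:AbsolutelyContinuousSpectrum} (with $n=1$) yields $E(\{p \in S_\kappa \mid p_0 \pm p_1 \in J\})\mathcal{H} \subset \mathcal{H}_{\mathrm{ac}}(L_{\pm,\kappa})$. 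Taking the union over $\kappa > 0$ and over bounded $J \subset (0,\infty)$, and using that $\mathcal{H}_{\mathrm{ac}}(L_{\pm,\kappa}) \subset \mathcal{H}_{\mathrm{ac}}(L_\pm)$, gives
\begin{align}
	E(\{p \in V_+ \mid p_0 \pm p_1 > 0\})\mathcal{H} \subset \mathcal{H}_{\mathrm{ac}}(P_0 \pm P_1).
\end{align}
Hence $\mathcal{H} = E_{P_0 \pm P_1}(\{0\})\mathcal{H} \oplus \mathcal{H}_{\mathrm{ac}}(P_0 \pm P_1)$, which is already the claimed statement for $d = 2$.

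It remains to identify $E_{P_0 \pm P_1}(\{0\})$ when $d \geq 3$: I must show $E(\{p \in V_+ \mid p_0 \pm p_1 = 0\}) = Q_0$, i.e.\ that a vector killed by $P_0 \pm P_1$ is translation-invariant. On $V_+$, the condition $p_0 \pm p_1 = 0$ together with $p_0 \geq |\mathbf{p}|$ forces $p_0 = \mp p_1$ and $p_0 \geq \sqrt{p_1^2 + p_2^2 + \cdots}$, hence $p_2 = \cdots = p_s = 0$ and then $p_0 = |p_1|$, so $p$ lies on the light ray $\{(|p_1|, \mp p_1, 0, \dots, 0)\}$. For $d \geq 3$ there is a second spatial boost $K_2$ available; one repeats the virial-theorem argument of Lemma~\ref{lem:TranslationInvariantVectors}: if $(P_0 \pm P_1)f = 0$ then, using the cutoff $S_\kappa$ and the commutation relations, one derives that the component of $f$ on this ray must also satisfy $P_2 f = 0$ (boosting in direction 2 moves mass off the ray unless $P_0 f = 0$), forcing $P_0 f = 0$ and hence, by Lemma~\ref{lem:TranslationInvariantVectors}, $Pf = 0$. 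The main obstacle is precisely this last step — making the ``a vector supported on the light ray is translation-invariant'' argument rigorous via a virial-type identity involving $K_2$ on the cutoff subspaces, and checking that it genuinely fails for $d = 2$ (where no transverse boost exists and massless one-particle states can sit on the ray), which is what the remark after the proposition is meant to explain.
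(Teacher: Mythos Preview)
Your overall strategy is sound and reaches the conclusion, but you are over-engineering in two places where the paper takes a much shorter path. First, the cutoff $E(S_\kappa)$ is unnecessary here, and your claim that ``$L_\pm\notin C^1(K_1)$ in general'' is incorrect. The group identity $\e^{\I t K_1}\e^{\I x L_\pm}\e^{-\I t K_1}=\e^{\I \e^{\mp t}x L_\pm}$ says exactly that $L_\pm$ is $(\pm K_1)$-homogeneous in the sense of the example following Definition~\ref{defn:MourreEstimate}, and that example already records that homogeneity alone gives $L_\pm\in C^\infty(K_1)$ and $[L_\pm,\I(\pm K_1)]=L_\pm$. The domain obstruction you recall from the momentum case arose because $[P_j,\I K_j]=P_0$ need not be $P_j$-bounded; here the commutator is $L_\pm$ itself, so the problem simply does not occur. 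The paper therefore skips $S_\kappa$ entirely, reads off the Mourre estimate on any bounded $J\subset(0,\infty)$, and invokes Corollary~\ref{cor:AbsolutelyContinuousSpectrum} to get $E(\{p\in V_+\mid p_0\pm p_1\neq 0\})\mathcal{H}\subset\mathcal{H}_{\mathrm{ac}}(P_0\pm P_1)$ in one stroke.

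Second, what you flag as the ``main obstacle'' for $d\geq 3$ is not one. You have already observed that on $V_+$ the constraint $p_0\pm p_1=0$ forces $p_2=\cdots=p_s=0$; by the spectral theorem this \emph{immediately} yields $P_2 f=0$ for any $f$ with $(P_0\pm P_1)f=0$. No separate virial argument with $K_2$ is required---and the one you sketch would only give $\langle f,[L_\pm,\I K_2]f\rangle=\langle f,P_2 f\rangle=0$, which does not imply $P_2 f=0$ since $P_2$ has no sign. Once $P_2 f=0$ is in hand, Lemma~\ref{lem:TranslationInvariantVectors} finishes. That is precisely the paper's one-line argument.
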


\begin{proof}
	From \eqref{eq:CommutationEnergyLorentz} and \eqref{eq:CommutationMomentumLorentz}, we obtain the following identity ($t,x\in\mathbb{R}$):
	\begin{align}
		\e^{\I t K_1} \e^{\I x(P_0\pm P_1)} \e^{-\I tK_1} = \e^{\I \e^{\mp t} x(P_0\pm P_1)}.
	\end{align}
	In the terminology of the example subsequent to Definition~\ref{defn:MourreEstimate}, the operator $P_0\pm P_1$ is $\pm K_1$-homogeneous. It follows that $P_0\pm P_1$ obeys a Mourre estimate on every open and bounded interval that is separated from 0. Thus, by Corollary~\ref{cor:AbsolutelyContinuousSpectrum}, $E(V_+\backslash\{p_0\pm p_1\neq 0\})\mathcal{H} \subset \mathcal{H}_{\mathrm{ac}}(P_0\pm P_1)$. If $d\geq 3$, then $E(V_+\backslash\{p_0\pm p_1\neq0\})\mathcal{H} = E(V_+\backslash\{0\})\mathcal{H}$. In fact, for $f\in\mathcal{H}$, $P_0f=\pm P_1f$ implies $P_2f=0$ by the spectrum condition. And, by Lemma~\ref{lem:TranslationInvariantVectors}, $P_2f=0$ implies $P_0f=P_1f=0$. 
\end{proof}

\begin{rem}
	Assume that an eigenstate $f\in\mathcal{H}/ Q_0\mathcal{H}$ with eigenvalue 0 of the mass operator $M=\sqrt{P_0^2-|\mathbf{P}|^2}$ exists. If $d=2$, then $P_0f=|\mathbf{P}|f=|P_1|f$ for such a massless excitation. If $f_\pm$ denotes the positive/negative momentum component of $f$, then $(P_0\mp P_1)f_\pm=0$. This illustrates that, in the case $d=2$, the subspace $E_{P_0\pm P_1}(\{0\})\mathcal{H}$ can be larger than $Q_0\mathcal{H}$.
\end{rem}

\subsection{Dilation-covariant representations} \label{ssec:DilationCovariantRepresentations}

Let $U:\mathbb{R}^d\to\mathfrak{B}(\mathcal{H})$ be a strongly continuous unitary representation of the translation group, whose generators obey the spectrum condition (Assumption~\ref{ass:SpectrumCondition}). In this subsection, we assume that $U$ is dilation-covariant, that is, a self-adjoint operator $D$ exists such that, for every $\mu\in\{0,\dots,s\}$ and $t,x\in\mathbb{R}$,
\begin{align}\label{eq:DilationCovariance}
	\e^{\I tD} \e^{\I x P_\mu} \e^{-\I tD} = \e^{\I \e^{-t}x P_\mu}.
\end{align}
We analyse the spectrum of the mass operator $M=\sqrt{P_0^2-|\mathbf{P}|^2}$, which is a well-defined self-adjoint operator due to the spectrum condition.

\begin{proof}[Proof of Proposition~\ref{prop:ACDilationCovariance}]
	In the terminology of the example subsequent to Definition~\ref{defn:MourreEstimate}, $M$ is $D$-homogeneous due to \eqref{eq:DilationCovariance}, implying $M\in C^\infty(D)$ and
	\begin{align}
		[M,\I D] = M.
	\end{align}
	Thus, $M$ obeys a Mourre estimate with conjugate operator $D$ on every open and bounded subset of $[a,\infty)$, $a>0$. The limiting absorption principle \eqref{eq:LAPMass} for the mass operator follows from Theorem~\ref{thm:LimitingAbsorptionPrinciple}. That the spectrum of $M$ is purely absolutely continuous in $(0,\infty)$ is a consequence of Corollary~\ref{cor:AbsolutelyContinuousSpectrum}. 
\end{proof}

\subsection*{Acknowledgements}

I thank Wojciech Dybalski for carefully proofreading the draft of this paper and for assistance preparing the publication. The research leading to these results received funding from National Science Centre, Poland, under the grant `Sonata Bis' 2019/34/E/ST1/00053.

\end{document}